\documentclass[letterpaper, 10 pt, conference]{ieeeconf}  % Comment this line out if you need a4paper

\IEEEoverridecommandlockouts                              % This command is only needed if 
\usepackage{graphicx}    
\usepackage{epsfig} % for postscript graphics files
\usepackage{mathptmx} % assumes new font selection scheme installed
\usepackage{times} % assumes new font selection scheme installed
\usepackage{amsmath} % assumes amsmath package installed
\usepackage{amssymb}  % assumes amsmath package installed
\usepackage{graphicx} 
\usepackage{makecell} 
\usepackage{latexsym,amsfonts,amsbsy}
\usepackage{comment}
\usepackage{algorithm}
\usepackage{algorithmic}

\usepackage{xcolor}

\usepackage{flushend}    % Include this line if your 
\usepackage{comment}
\usepackage{here}
\usepackage{mathtools}
\usepackage{amsmath}
\usepackage{enumerate}
\usepackage{nccmath}
\usepackage{booktabs}
\usepackage{multirow}
\usepackage{url}

\newtheorem{proposition}{Proposition}

\newtheorem{theorem}{Theorem}
\newtheorem{corollary}{Corollary}

\newtheorem{definition}{Definition}
\newtheorem{example}{Example}
\newtheorem{remark}{Remark}

\newtheorem{problem}{Problem}

\title{\LARGE \bf
Marking Data-Informativity and Data-Driven
Supervisory Control of Discrete-Event Systems
}

\author{Yingying Liu$^{1}$, Kuma Fuchiwaki$^{1}$, and Kai Cai$^{1}$% <-this % stops a space
\thanks{$^{1}$Department of Core Informatics, Osaka Metropolitan University, 5588585 Osaka, Japan
        {\tt\small  cai@omu.ac.jp}}%
}

\begin{document}

\maketitle
\thispagestyle{empty}
\pagestyle{empty}

%\begin{frontmatter}
%\runtitle{Insert a suggested running title}  % Running title for regular 
                                              % papers but only if the title  
                                              % is over 5 words. Running title 
                                              % is not shown in output.

%\title{Fully Automated Nonblocking Heterarchical Supervisory Control of Large-Scale Discrete-Event Systems By Clustering and Abstraction\thanksref{footnoteinfo}} % Title, preferably not more 
                                                % than 10 words.

%\thanks[footnoteinfo]{This paper was not presented at any IFAC meeting.  Corresponding author Kai Cai. %Tel. +XXXIX-VI-mmmxxi. Fax +XXXIX-VI-mmmxxv.}

%\author[Yingying]{Yingying Liu}\ead{yingyingliu611@163.com},    % Add the \author[Kai]{Zhaojian Cai},%\ead{zhaojian.cai@c.info.eng.osaka-cu.ac.jp},  % (ead) as shown\author[Kai]{Kai Cai}\ead{cai@omu.ac.jp}  % (ead) as shown\address[Yingying]{School of Mechanical and Electronic Engineering, Northwest A$\&$F University, Xianyang, China}  % Please supply              \address[Kai]{Department of Core Informatics, Osaka Metropolitan University, Osaka, Japan}        % here.
%\address[Kai]{Department of Core Informatics, Osaka Metropolitan University, Osaka, Japan}        % here.

%\begin{keyword}            % Five to ten keywords,  
%Supervisory control; Decentralized and hierarchical supervisory control; Clustering; Abstraction-based approach; Large-scale discrete-event systems.                    % chosen from the IFAC 
%\end{keyword}                             % keyword list or with the 
                                          % help of the Automatica 
                                          % keyword wizard

\begin{abstract}         
In this paper we develop a data-driven approach for marking nonblocking supervisory control of discrete-event systems (DES). We consider a setup in which models of DES to be controlled are unknown, but a set of data concerning the behaviors of DES is available. We ask the question: Under what conditions of the available data set can a valid marking noblocking supervisor be designed for the unknown DES to satisfy a given specification? Answering this question, we identify and formalize a novel concept called marking data-informativity. Moreover, we design an algorithm for the verification of this concept. Next, if the data set fails to be marking informative, we propose two related new
concepts of restricted marking data-informativity and marking informatizability. Finally, we develop an algorithm to compute the largest subset of control
specification for which the data set is least restricted marking informative.
\end{abstract}

%\end{frontmatter}

\section{Introduction}

A system in which the state transitions discretely due to the occurrence of events is called a discrete-event system (DES).
Tansportation systems, production systems, and communication systems are examples of DES \cite{Takai2007DiscreteEventSystems}, \cite{cassandras2008introduction}. Supervisory control is known as an effective control method for DES. To enforce a desired specification on a given DES, supervisory control theory aims to construct a feedback controller called a supervisor, whose control mechanism is disabling or enabling occurrences of certain (controllable) events. Autonomous driving and distribution warehouse automation are typical applications of supervisory control \cite{cai2021supervisory}, \cite{ramadge1987supervisory}, \cite{knuth1984}, \cite{wonham1987supremal}, \cite{wonham2018supervisory}, \cite{Takai2014SupervisoryControl}.

In recent years, data-driven approaches attract much attention in various fields because of the spread of IoT and rapid development of data sensing technology \cite{karimi2024determining}, \cite{schrum2024maveric}. This means the available data on many systems is growing rapidly \cite{YasukochiKeiko2023ExponentialDevelopment}, \cite{chen2024ai}. On the other hand, supervisory control of DES is mostly a model-based approach, with few studies attempting a data-driven approach \cite{konishi2022efficient}, \cite{steentjes2021data}, \cite{gu2024data}. These are the motivations for studying data-driven supervisory control.

In model-based supervisory control, a DES to be controlled is first modeled as a finite-state automaton, and its behaviors represented by regular languages, then supervisory control design is carried out based on those models. On the other hand, our data-driven approach does not aim to identify models of DES, but deals with data set concerning the behavior of DES. The goal is then to construct a valid supervisor for a family of DES models that all can generate the data set. This allows effective control over DES where the environment is unknown and modeling is difficult. In prior research, a concept of data-informativity, which captures the notion that the available data set contains sufficient information such that a valid supervisor may be constructed for a family of DES models that all can generate the data set, the necessary and sufficient conditions for data-informativity, the algorithm for its verification \cite{cai2022data}, \cite{ohtsuka2023data}. 
However, previous studies did not account for marked behaviors in their formulations. In this paper, we address this limitation by incorporating data-informativity that considers marked behaviors. This approach enhances the supervisory control design, enabling the system to avoid deadlocks while ensuring the achievement of specified goals \cite{Fuchiwaki2024Marking}.

In this thesis, we aim to initiate a systematic development of a data-driven approach for marking supervisory control of DES
and verify its effectiveness through simulation examples. Specifically, we consider a setup in which models of DES to be controlled are unknown, but three types of data concerning the behaviors of DES are available. The first type is observation data $D$, which is a collection of observed behaviors (strings of events) from the unknown DES. The second type is observation data on marked behaviors $D_m$. The third type is prior knowledge data $D^-$ about the impossible behaviors of the DES. Given these three types of data, we consider the question of what conditions of the available dataset it is possible to design a valid marking nonblocking supervisor that meets the given specifications for an unknown DES. Answering this question, we identify and formalize a novel concept called marking data-informativity. Marking data-informativity characterizes a condition that the given data set information such that a valid marking nonblocking supervisor may be constructed for a family of DES models that all can generate the data set. Thus rather than trying to first identify a model for the unknown DES, our approach based on marking data-informativity aims to directly construct from the data set a marking nonblocking supervisor valid for all possible models undistinguishable from the unknown DES. If such a marking nonblocking supervisor can be constructed, it is also valid for the unknown DES.

The structure of the thesis is as follows: 
Section~\ref{sec:preliminaries} reviews preliminaries.
Section~\ref{chap:Data-Driven Marking Supervisory Control and Marking Data-Informativity} presents data-driven marking supervisory control and marking data-informativity, while Section~\ref{chap:Restricted marking data-informativity} addresses restricted marking data-informativity.
Section~\ref{chap:Conclusions} concludes the paper.

\section{PRELIMINARIES AND PROBLEM STATEMENT}\label{sec:preliminaries}
 Let $\Sigma$ be a nonempty finite alphabet of symbols $\sigma, \alpha, \beta, \ldots$. These symbols will denote events and $\Sigma$ the event set. A string $s = \sigma_1 \sigma_2 \dots \sigma_k,k \leq 1$, is a finite sequence of events. Let $\Sigma ^ \ast $ be the set of all finite-length strings including the empty string $\epsilon$, meaning `do nothing'. A language $L$ is an arbitrary subset of strings in $\Sigma ^ \ast$, i.e. $L \subseteq \Sigma ^ \ast$. When there exists an event sequence $s_2 \in \Sigma^\ast$ such that $s = s_{1}s_2$, the event sequence $s_1\in\Sigma^\ast$ is called a prefix of $s$. For any language $L \subseteq \Sigma^\ast$, the language represented by all prefixes of its elements can be denoted as $\overline{L} := \{s_1 \in \Sigma ^ \ast | ( \exists s_2 \in \Sigma ^ \ast) s_{1}s_2\in L \}.$
It is always true that $L \subseteq \overline{L}$. We say that a language $L$ is closed if $L = \overline{L}$.

Automata serve as a fundamental model for discrete-event systems. A finite state automaton $G$ is a five-tuple
\begin{equation} \label{eq:DFA}
G := (Q,\Sigma,\delta,q_0,Q_m) 
\end{equation}
where $Q$ is the finite state set, $q_0 \in Q$ the initial state, $Q_m \subseteq Q$ the set of marker states, $\Sigma$ the finite event set, and $\delta : Q\times\Sigma \rightarrow Q$ the (partial) state transition function. 
Write $\delta(q,\sigma)!$ to mean $\sigma\in\Sigma$ is defined at state $q\in Q$, and write $\neg\delta(q,\sigma)!$ to mean $\sigma\in\Sigma$ is not defined at $q\in Q$. We say that the automaton $G$ is deterministic if
\begin{equation}
 (\forall q \in Q, \forall \sigma \in \Sigma) \hspace{0.2cm} \delta(q,\sigma)! \Longrightarrow |\delta(q,\sigma)| = 1 . \nonumber
\end{equation}
Namely, the destination state of every state transition is
unique. We shall focus exclusively on deterministic automaton
unless otherwise stated. The state transition function $\delta$
may be inductively defined, and we write $\delta(q, s)!$ to mean that $\delta(q, s)$ is defined. 
The closed behavior $L(G)$ of $G$ is $L(G):=\{s \in \Sigma^*|\delta(q_0,s)!\}.$
As defined $L(G)$ is closed. Furthermore, the marked behavior $L_m(G)$ of $G$ is the language defined as the set of all strings of $L(G)$ which $G$ can generate starting from the initial state $q_0$ to the marker state $Q_m$ is defined as
$L_m(G):=\{s \in L(\textbf{G})|\delta(q_0,s) \in Q_m\}.$
As in the case of an automaton, a state $q \in Q$ is reachable if $(\exists s \in \Sigma ^ \ast) \delta(q_0,s)! \: \& \: \delta(q_0,s)=q.$
$G$ itself is reachable if $q$ is reachable for all $q \in Q$. A state $q \in Q$ is coreachable if 
  $  (\exists s \in \Sigma ^ \ast) \delta(q,s) \in Q_m.$ 
$G$ itself is coreachable if $q$ is coreachable for all $q \in Q$.
$G$ is nonblocking if every reachable state is coreachable, or equivalently $L(G) = \overline{L_m(G)}$, namely every string in the closed behavior may be completed to a string in the marked behavior. 
Finally, $G$ is trim if it is both reachable and coreachable. Of course trim implies nonblocking, but the converse is false. 

Standard model-based supervisory control consists of three elements: the plant (to be controlled), control specification, and the supervisor (controller). The plant is modeled using a DFA $G$. The supervisor observes events occurring in the plant and forbids inappropriate events to ensure that the plant meets the control specifications.

Generally, not all strings in the marked behavior $L_m(G)$ of the plant $G$ are desired. Thus, we represent a desired behavior to be enforced on $G$ as a control specification $K \subseteq L_m(G)$. Note that a more general control specification $E \subseteq \Sigma^\ast$ may be considered. In this case, set $K := L_m(G)\cap E$ and we again have a specification $K \subseteq L_m(G)$ to be enforced on $G$.

For a mechanism to enforce a specification $K \subseteq L_m(G)$ on the plant $G$, we assume that a subset of events $\Sigma_c \subseteq \Sigma$, called the controllable events, are capable of being enabled or disabled by an external controller. On the contrary, $\Sigma_u := \Sigma \setminus \Sigma_c$ is the set of uncontrollable events, which cannot be externally disabled and must be considered permanently enabled.

Under the above mechanism, define a supervisor to be a function $V: L(G) \to Pwr (\Sigma_c)$. Here, $Pwr(\Sigma_c)$ denotes the set of all subsets of controllable events. Thus, a supervisor assigns to each string $s \in L(G)$ generated by the plant $G$ a subset of controllable events $V(s) \subseteq \Sigma_c$ to be disabled. Write $V / G$ for the closed-loop system. The language of the closed-loop system $L(V / G)$ is defined as follows:

 \begin{enumerate}
     \item $\epsilon \in L(V / G)$;
     \item if $s \in L(V / G)$ and $\sigma \in \Sigma \setminus V(s)$ and $s\sigma \in L(G)$, then $s\sigma \in L(V / G)$;
     \item no other strings belong to $L(V / G)$.
 \end{enumerate}

\noindent
Thus, $L(V / G)$ contains those strings in $L(G)$ that are not disabled by the supervisor $V$. By definition, $L(V / G) \subseteq L(G)$ and $\overline{L(V / G)} = L(V / G)$. 

In addition, the marked language of the closed-loop system $L_m(V/G)$ is defined as follows:
\begin{align*}
    L_m(V/G) = L(V/G)\cap L_m(G).
\end{align*}
The supervisor $V$ is called a {\em marking supervisor} if 
the marked language of the closed-loop system is
\begin{align*}
    L_m(V/G) = L(V/G)\cap K.
\end{align*}
Namely the supervisor $V$ takes care of the marking information of the specification $K$. Moreover $V$ is called a marking nonblocking supervisor if $\overline{L_m(V/G)} = L(V/G)$.

\begin{definition}\label{def:controllability}
(Controllability): Given a plant $G$, a control specification $K \subseteq L_m(G)$ is said to be controllable with respect to $G$ provided
\begin{align}
 (\forall s \in \overline{K}, \forall \sigma \in \Sigma_u) \hspace{0.1cm}s\sigma \in L(G) \Rightarrow s\sigma \in \overline{K}.\label{eq:controllability}
\end{align}
\end{definition}

In words, a specification $K$ is controllable with respect to $G$ if and only if any string in the prefix closure $\overline{K}$ cannot exit $\overline{K}$ on a continuation by an uncontrollable event. Namely, the prefix closure of $K$ is invariant under uncontrollable flows. Equivalently, we can write $\overline{K} \Sigma_u \cap L(G) \subseteq \overline{K}$. It is known that the specification language $K (\neq \emptyset)$ being controllable is necessary and sufficient for the existence of a marking nonblocking supervisor $V$ such that $L_m(V / G) = K$ and $\overline{L_m(V/G)}=L(V/G)$.

Suppose that $K \subseteq L_m(G)$ is controllable. Then the marking nonblocking supervisor $V: L(G) \to Pwr(\Sigma_c)$ such that $L_m(V / G) = K$ is constructed as follows:
\begin{align}
V(s) = \left\{
\begin{array}{ll}
 \{\sigma \in \Sigma_c \mid s\sigma \notin K\} &\text{if } s \in \overline{K}, \\
 \emptyset & \text{if } s \in L(G) \setminus \overline{K}. 
 \label{eq:model-based supervisor}
\end{array}
\right.
\end{align}

Whether or not $K$ is controllable, we can write $C(K)$ for the family of all controllable sublanguages of $K$:
\begin{align}
C(K) := \{ K' \subseteq K \mid \overline{K'} \Sigma_u \cap L(G) \subseteq \overline{K'}\}.
\label{eq:model-base C(K)}
\end{align}

It is known that the union of controllable sublanguages of $K$ is still a controllable sublanguage of $K$. This means that $C(K)$ is closed under set union, so $C(K)$ contains a unique supremal element:
\begin{align}
\text{sup} \, C(K) := \bigcup \{ K' \mid K' \in C(K) \}.
\label{eq:model-base supC(K)}
\end{align}

Since $\text{sup} \, C(K)$ is controllable, as long as $\text{sup} \, C(K) \neq \emptyset$, there exists a marking nonblocking supervisor $V_{\text{sup}}$ such that $L_m(V_{\text{sup}} / G) = \text{sup} \, C(K)$. In this sense, $V_{\text{sup}}$ is optimal (maximally permissive), allowing the generation by $G$ of the largest possible set of strings that satisfies a given specification.

\begin{figure}[b]
  \centering
  \includegraphics[width=80mm]{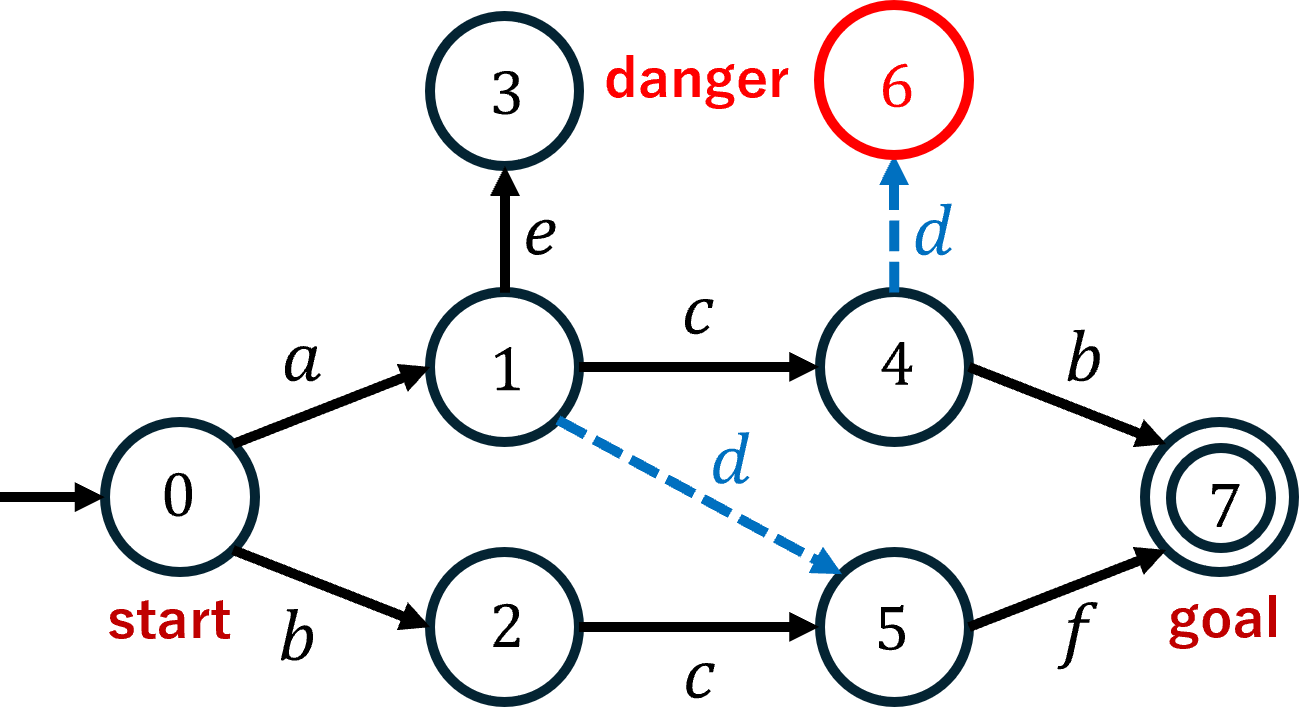}
  \caption{Robot navigation: plant $G_1$.  \label{fig:real_plant}}
\end{figure}

\begin{figure}[b]
  \centering
  \includegraphics[width=85mm]
  {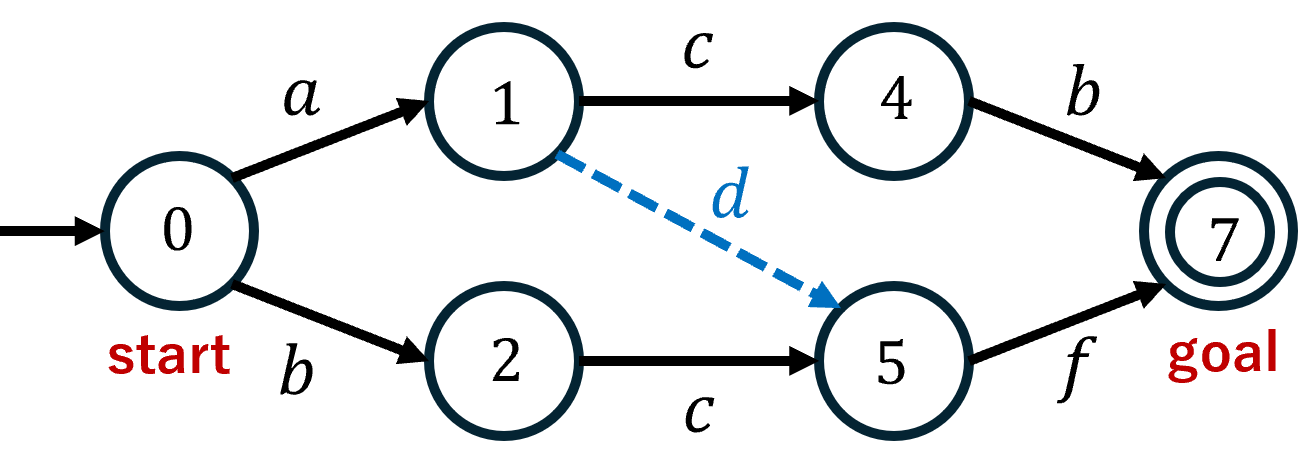}
  \caption{Robot navigation: specification $K_1(\subseteq L_m(G_1))$.  \label{fig:spec for real plant}}
\end{figure}

\begin{example} \label{ex1:standard supervisory control}
    For illustration, we provide a running example of robot navigation. Consider a robot that moves from a start point to a goal point. There exist some paths
    into a dangerous zone along the route and the robot must avoid them while heading for the goal. Also, the robot may move uncontrollably at some location due to possible disturbance from the environment. An automaton modeling this described scenario is displayed in Fig. \ref{fig:real_plant}, and we consider this automaton (say $G_1$) as the plant to be controlled.
    In this plant, each state written in number represents a location of the environment where the robot navigates. Here state $0$ is the start point and state $7$ is the goal point. This goal state is the marker state, which is denoted by double-circle. Each event written in alphabet represents the transition of the robot. We suppose that event $d$ (dashed arrow from state 1 to 5 and from 4 to 6) represents an uncontrollable transition, and other events are all controllable: i.e. $\Sigma_c = \{a, b, c, e, f\}$ and $\Sigma_u = \{d\}$. State 6 represents a danger state, and the control specification is to avoid this danger state. This (safety) specification may be written as a sublanguage of $L_m(G_1)$ as follows:
    \begin{align*}
        K_1 = \{acb, adf, bcf\}.
    \end{align*}
    This specification $K_1$ may be represented by the automaton shown in Fig. \ref{fig:spec for real plant}. Compared with the plant in Fig. \ref{fig:real_plant}, the specification automaton removes the (uncontrollable) transition from state 4 to the danger state 6 and the (controllable) transition from state 1 to state 3.

    Since the uncontrollable event $d$ can exit $\overline{K_1}$, the specification language $K_1$ is uncontrollable. Then, consider sublanguage $K_1'=\{adf, bcf\}$ of $K_1$ (eliminating path $acb$ from $K_1$). In this case, no uncontrollable event can exit $\overline{K_1'}$, so $K_1'$ is controllable and $ \text{sup}\hspace{0.1cm}C(K_1)=K_1'$. Thus we can construct the following marking nonblocking supervisor $V_1 : L(G_1)\rightarrow Pwr(\Sigma_c)$  such that $L_m(V_1/G_1)=K_1'$:
    \begin{align*}
    V_1(s) = \left\{
        \begin{array}{ll}
         \{c, e\} &\text{if } s \in \{a\}, \\
         \emptyset & \text{if } s \in L(G_1) \setminus \{a\}.
        \end{array}
    \right.
    \end{align*}
    This supervisor disables $c$ and $e$ at state 1 in Fig. 
\ref{fig:real_plant}.
\end{example}

In the above example, the supervisor $V_1$ is designed based on the assumption that the plant model $G_1$ is known. Now we pose this question: if $G_1$ is unknown (e.g. the robot navigates in an unknown environment), under what conditions can we still design a marking nonblocking supervisor? This question motivates us to study a data-driven approach to marking nonblocking supervisory control.

\section{Data-Driven Marking Supervisory Control and Marking Data-Informativity}
\label{chap:Data-Driven Marking Supervisory Control and Marking Data-Informativity}

Section 3 formulates the data-driven marking supervisory control problem in Section \ref{sec:prob formulate}. In Section \ref{sec:Marking Data-Informativity and Its Criterion}, we introduce the concept of marking data-informativity and establish a necessary and sufficient condition to characterize it. Based on this condition, Section \ref{sec:Verification of marking data-informativity} presents an algorithm for the verification
of marking data-informativity.

\subsection{Problem Formulation of Data-Driven Marking Supervisory Control} \label{sec:prob formulate}
Suppose that we have a plant whose automaton model $G$ is
unknown except for the event set $\Sigma(=\Sigma_c\cup\Sigma_u)$. Even under this circumstance, there are often situations where strings generated by the plant may be observed, i.e. a certain amount of output sequence data is available. Also, in the observation it is often possible to identify the strings that reach a  marker state. In addition, from prior knowledge of the event set $\Sigma$, it is often the case that there are certain output sequences that are obviously not generatable by the plant. For example, suppose we have a set of events: turn on a machine, press a switch on the machine, and the machine produces an output. Then, without knowing internal working mechanism of the machine, it is obvious that the machine cannot output anything before its power is turned on. In view of this, we assume that we can obtain three types of finite data sets $(D, D_m, D^-)$, where $D\subseteq\Sigma^*$ is the observed 
behavior from the plant, $D_m\subseteq \overline{D}$ is a subset of observed marked behavior, and $D^- \subseteq\Sigma^\ast$ is prior 
knowledge of impossible behavior of the plant. Since each string in $D$ and $D_m$ is observed from $G$, $\overline{D}$ is a subset of the closed behavior of $G$ and $D_m$ is a subset of the marked behavior of $G$: i.e. $\overline{D}\subseteq L(G),\hspace{0.1cm}D_m\subseteq L_m(G)$.  On the contrary since each string in $D^-$ is known to be impossible to be generated by $G$, $D^-$ and the closed behavior of $L(G)$ do not have any common elements: i.e. $D^-\cap L(G)=\emptyset$. As a result, $\overline{D}\cap D^- = \emptyset$. Given a control specification $E\subseteq\Sigma^*$, our goal is to design a
marking nonblocking supervisor (whenever it exists) to enforce $E$ for the unknown plant based on the data sets $(D, D_m, D^-)$.

\begin{example} \label{ex2:set up}
    Consider again the robot navigation example in Example \ref{ex1:standard supervisory control} and the plant model $G_1$ in Fig. \ref{fig:real_plant}. Now we suppose that $G_1$ is unknown except for the event set $\Sigma = \{a, b, c, d, e, f\}$, and certain observations and prior knowledge of the plant behavior are available. 
    For example, we have observed a string $bcf$ from the plant, i.e. the robot moves from the initial state to the goal state following the path $bcf$. Hence we have $D=\{bcf\}$ and $D_m=\{bcf\}$. 
    In addition, we have prior knowledge that the plant cannot generate string $e$, namely the robot can never start its navigation from location 1 and makes a first move to location 3. Thus $D^-=\{e\}$. For this triple $(D, D_m, D^-)$, there may exist infinitely many automata that can generate $D$ and mark $D_m$, while not generating $D^-$. In other words, our unknown plant $G_1$ cannot be uniquely identified based on the triple $(D, D_m, D^-)$. 
    For example, $G_2$ in Fig. \ref{fig:automaton_G2} and $G_3$ in Fig. \ref{fig:automaton_G3} cannot be distinguished from the true plant $G_1$. In order to design a supervisor for the true plant based only on $(D, D_m, D^-)$, we must construct a supervisor that is valid for all such possible plants.
    Intuitively, more observations and prior knowledge can help reduce the number of plant models that cannot be distinguished from the true one. Say if we observe an additional string 
    $adf$ (so that $D=\{adf, bcf\}$, $D_m=\{adf, bcf\}$), then $G_3$ can be ruled out from the candidate while $G_2$ is still possible.
\end{example}

\begin{figure}[H]
  \centering
  \includegraphics[width=80mm]
  {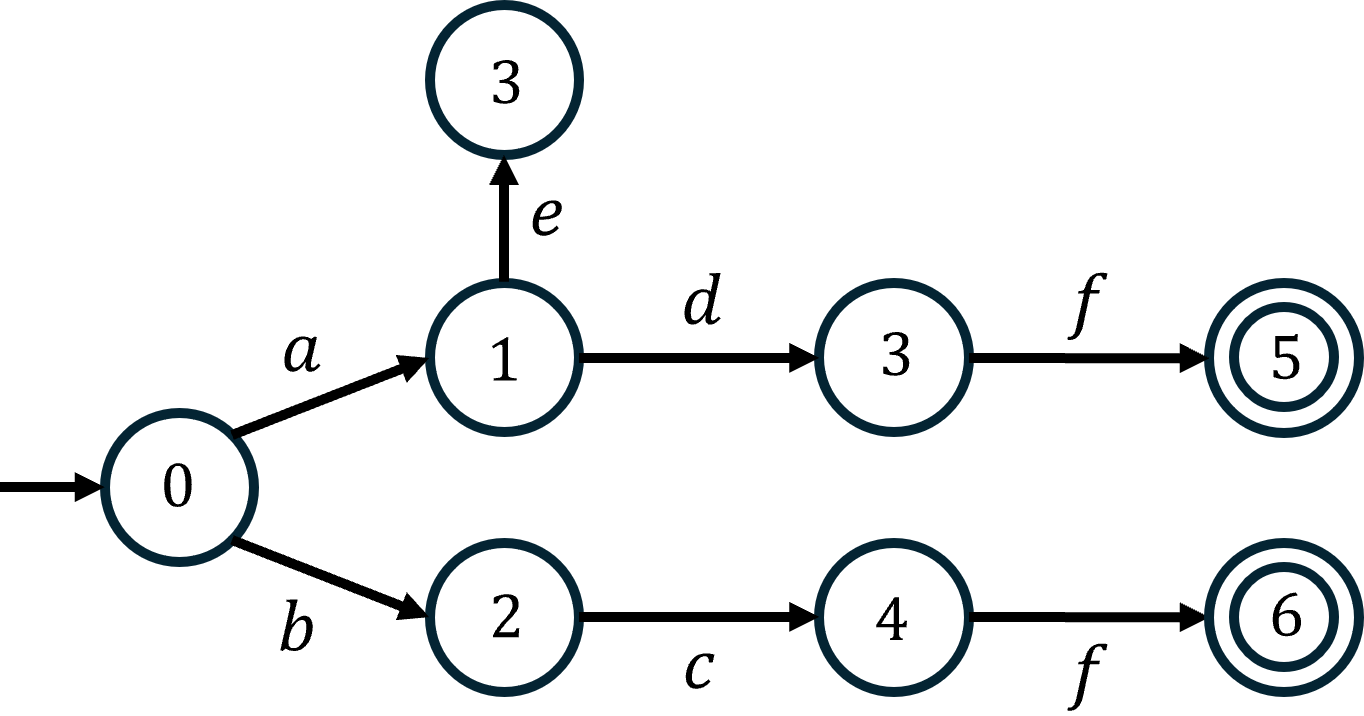}
  \caption{$G_2$ in Example \ref{ex2:set up}.  
  \label{fig:automaton_G2}}
\end{figure}

\begin{figure}[H]
  \centering
  \includegraphics[width=90mm]
  {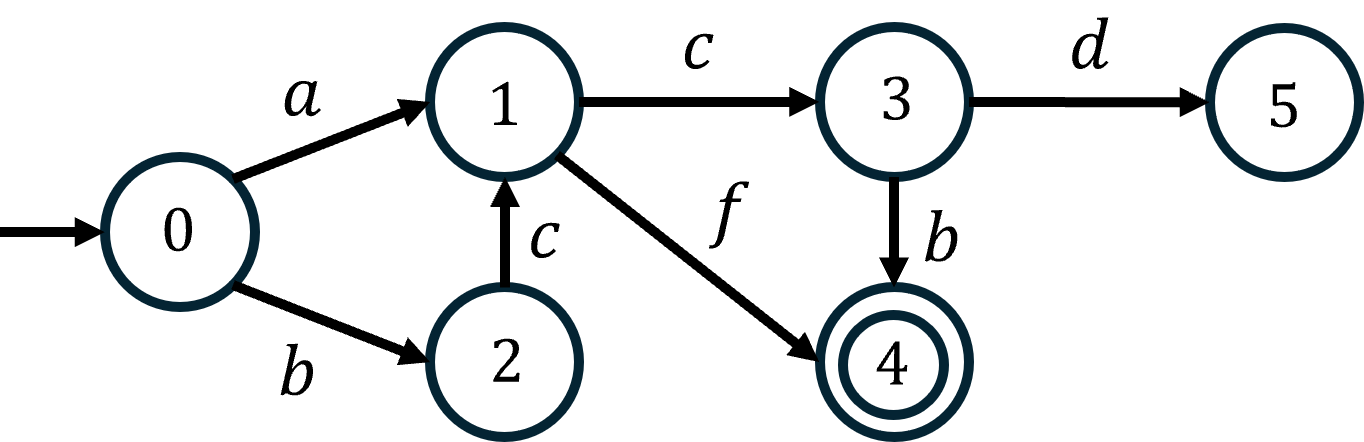}
  \caption{$G_3$ in Example \ref{ex2:set up}. 
  \label{fig:automaton_G3}}
\end{figure}

As in the example above, there are generally multiple
possible plants compatible with the given data triple $(D, D_m, D^-)$. This is defined below as a consistency property.

\begin{definition}(consistency).\label{def:consistency}
    Suppose that an event set $\Sigma$ is given. Then, for finite sets $D, D_m, D^-\subseteq\Sigma^*$ satisfying $D_m\subseteq\overline{D}$ and $\overline{D}\cap D^- = \emptyset$, an automaton $G =(Q,\Sigma,\delta,q_0,Q_m)$ is said to be consistent with $(D, D_m, D^-)$ if $\overline{D}\subseteq L(G)$, $D_m\subseteq L_m(G)$, and $D^-\cap L(G)=\emptyset$.
\end{definition}

In words, an automaton $G$ is consistent with a data triple $(D, D_m, D^-)$ if and only if all strings in $\overline{D}$ can be generated by $G$ and all strings in $D_m$ can be marked by $G$, whereas no strings $D^-$ can be generated by $G$. Thus in Example \ref{ex2:set up}, $G_1$, $G_2$ and $G_3$ are consistent with $(D, D_m, D^-)$ where $D=\{bcf\}$, $D_m=\{bcf\}$ and $D^-=\{e\}$. If we observe an additional string $adf$ ($D=\{adf, bcf\}$, $D_m=\{adf, bcf\}$ and $D^-=\{e\}$), $G_1$ and $G_2$ are
still consistent with $(D, D_m, D^-)$, but $G_3$ becomes not consistent. It is easy to verify that if we observe some additional strings or we have more knowledge about the strings that the plant cannot generate, the number of consistent models decreases.

\begin{remark} \label{re:relationship of regular language and automaton}
 It is well known \cite{hopcroft2007introduction} that for every regular language $L_m\subseteq\Sigma^*$, there exists an automaton $G$ such that $L_m(G) = L_m$. Thus for any finite sets $D, D_m, D^- \subseteq\Sigma^*$(which are regular) satisfying $D_m\subseteq\overline{D}$ and $\overline{D}\cap D^- = \emptyset$, one can always find an automaton $G$ such that $L(G)= \overline{D}$, $L_m(G)=D_m$ and hence $D^-\cap L(G)=\emptyset$. This means that there exists at least one automaton $G$ consistent with $(D, D_m, D^-)$.
\end{remark}

Before we proceed, we summarize some basic properties of consistent plants.

\begin{proposition}\label{prop:consisntent}
There exists a consistent plant with \((D, D_m, D^-)\) if and only if $D_m\subseteq \overline{D}$ is regular and \(\overline{D} \cap D^- = \emptyset\). If \(G\) is consistent with \((D, D_m, D^-)\), then \(G\) is also consistent with \((\overline{D}, D_m, D^-)\). Moreover, it holds that

\begin{align} \label{eq:consisntent}
  \bigcap \{L(G) \mid G \text{ is consistent with } (D, D_m, D^-)\} = \overline{D}.
\end{align}
\end{proposition}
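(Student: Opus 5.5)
We prove the three claims in the order stated, using only Definition~\ref{def:consistency} and the construction of Remark~\ref{re:relationship of regular language and automaton}.

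\emph{Existence.} For the ``only if'' direction, suppose $G$ is consistent with $(D,D_m,D^-)$. Then $D_m\subseteq L_m(G)\subseteq L(G)$. Also $D^-\cap \overline{D}\subseteq D^-\cap L(G)=\emptyset$. The requirement $D_m\subseteq\overline{D}$ is part of the standing hypothesis of Definition~\ref{def:consistency}, so it is inherited rather than derived. Regularity of $D_m$ is automatic, since $D_m$ is finite.

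For the ``if'' direction, we build an automaton explicitly, as in Remark~\ref{re:relationship of regular language and automaton}. Because $D$ is finite, $\overline{D}$ is finite, and we take the prefix-tree automaton:
\[
Q:=\overline{D},\qquad q_0:=\epsilon,\qquad \delta(s,\sigma):=s\sigma \text{ whenever } s\sigma\in\overline{D},\qquad Q_m:=D_m .
\]
This automaton is deterministic and has $L(G)=\overline{D}$. Its marked behavior is $L_m(G)=D_m$, which is well defined because $D_m\subseteq\overline{D}$. Finally, $D^-\cap L(G)=D^-\cap\overline{D}=\emptyset$. Hence $G$ is consistent.

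\emph{Replacing $D$ by $\overline{D}$.} The closure operator is idempotent, so $\overline{\overline{D}}=\overline{D}$. The three conditions defining consistency with $(\overline{D},D_m,D^-)$ are therefore literally the same conditions as for $(D,D_m,D^-)$. The side hypotheses also carry over: $D_m\subseteq\overline{\overline{D}}$ and $\overline{\overline{D}}\cap D^-=\emptyset$.

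\emph{The intersection identity \eqref{eq:consisntent}.} Write $\mathcal{I}$ for the intersection on the left-hand side.
\begin{itemize}
\item[($\supseteq$)] Every consistent $G$ satisfies $\overline{D}\subseteq L(G)$, so $\overline{D}\subseteq\mathcal{I}$. The family being intersected is nonempty by the existence part, so the intersection is not the vacuous $\Sigma^*$.
\item[($\subseteq$)] The prefix-tree automaton above is itself consistent and has $L(G)=\overline{D}$. Therefore $\mathcal{I}\subseteq\overline{D}$.
\end{itemize}

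No step is genuinely difficult. The only point needing care is the bookkeeping in the existence equivalence. The conditions $D_m\subseteq\overline{D}$ and $\overline{D}\cap D^-=\emptyset$ are standing hypotheses in Definition~\ref{def:consistency}, so the ``only if'' direction is really a consistency check rather than a derivation. The regularity clause is trivially satisfied because all three data sets are finite.
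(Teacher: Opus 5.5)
Your proof is correct and follows the paper's route. The paper just declares the result immediate from Definition~\ref{def:consistency} and Remark~\ref{re:relationship of regular language and automaton}, and your prefix-tree automaton is an explicit instance of the automaton with $L(G)=\overline{D}$ and $L_m(G)=D_m$ that the remark asserts exists.

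One tacit assumption, shared with the paper, is $D\neq\emptyset$. If $D=\emptyset$ then $\epsilon\notin\overline{D}$ cannot be your initial state, and since every automaton generates $\epsilon$, the statement itself fails: $D=D_m=\emptyset$, $D^-=\{\epsilon\}$ meets the hypotheses but admits no consistent plant.
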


 The proof of Proposition \ref{prop:consisntent} follows immediately from Definition \ref{def:consistency} and Remark \ref{re:relationship of regular language and automaton}.

The last assertion (\ref{eq:consisntent}) of Proposition \ref{prop:consisntent} motivates us to investigate the supervisory control over \( \overline{D} \). Now, we formulate the data-driven marking supervisory control problem. We denote a control specification based on \( D_m \) by:

\begin{align}
K_{D_m} := D_m \cap E, \quad E \subseteq \Sigma^* \text{ (regular language)}.\label{eq:spec besed on Dm}
\end{align}

\begin{definition} \label{def:marked language of VD/G}
Given data sets $(D, D_m, D^-)$, let $G$ be a plant consistent with $(D, D_m, D^-)$ and \( V_D : \overline{D} \to Pwr (\Sigma_c) \) be a marking noblocking supervisor for $\overline{D}$. We define the marked language of $V_D/G$ based on $D_m$ as 
\begin{align}\label{eq:marked language of VD/G}
  L_m(V_D/G) = L(V_D/G)\cap D_m.
\end{align}
\end{definition}

\begin{problem}\label{prob:data-driven marking supervisory control problem}
Suppose that we are given an event set \( \Sigma = \Sigma_c \cup \Sigma_u \), a control specification \( E \subseteq \Sigma^* \), and finite data sets \( D, D_m, D^- \subseteq \Sigma^* \) such that \( K_{D_m} \) in (\ref{eq:spec besed on Dm}) is nonempty, $D_m\subseteq\overline{D}$, and $ \overline{D} \cap D^- = \emptyset$. Construct (if possible) a marking nonblocking supervisor \( V_D : \overline{D} \to Pwr (\Sigma_c) \) such that \( L_m(V_D/G) = K_{D_m} \) for every plant \( G \) consistent with \((D, D_m, D^-)\).
\end{problem}

Since the true plant is consistent with \((D, D_m, D^-)\), the supervisor satisfying the required conditions in Problem \ref{prob:data-driven marking supervisory control problem} is valid for the true plant. If the true plant \( G \) was known, we would construct a marking nonblocking supervisor to enforce $K = L_m(G) \cap E$
(whenever \( K \) is controllable). In our data-driven setup, \( \overline{D} \) and $D_m$ each represent the accessible subsets of \( L(G) \) and $L_m(G)$ based on our observation of the plant; hence constructing a supervisor to enforce \( K_{D_m} = D_m \cap E \) is the most that can be done based on the data available. 
If the behavior represented by \( K_{D_m} \) is too small/restrictive, one can consider enlarging $\overline{D}$ and \( D_m \) by observing more behaviors of the plant. The closer $\overline{D}$ and \( D_m \) each approximate \( L(G) \) and $L_m(G)$, the closer the data-driven enforceable behavior \( K_{D_m} \) approximates the original model-based behavior \( L_m(G) \cap E \).

\subsection{Marking Data-Informativity and Its Criterion}
\label{sec:Marking Data-Informativity and Its Criterion}
As mentioned in Section \ref{sec:prob formulate}, the existence of a marking nonblocking supervisor \( V_D \) such that \( L_m(V_D/G) = K_{D_m} \) for all plants \( G \) consistent with \( (D, D_m, D^-) \) is equivalent to the controllability of the specification language \( K_{D_m} (\neq \emptyset) \). This implies that whether the available data has sufficient information can be characterized in terms of controllability.

\begin{definition}\label{def:informativity}
(Marking data-informativity). We say that \( (D, D_m, D^-) \) is {\em marking informative} for a given control specification \( E \) if there exists a marking nonblocking supervisor satisfying the required condition in Problem \ref{prob:data-driven marking supervisory control problem}, or equivalently if \( K_{D_m} \) in (\ref{eq:spec besed on Dm}) is nonempty and controllable with respect to all plants \( G \) consistent with \( (D, D_m, D^-) \).
\end{definition}

Recall that, for a known plant, if an uncontrollable event \( \sigma \in \Sigma_u \) can happen after \( s \in K_{D_m} \) in the plant, then \( s\sigma \) needs to remain in \( \overline{K_{D_m}} \) for the controllability of \( K_{D_m} \) with respect to the plant; see (\ref{eq:controllability}). On the contrary, for the data-driven case (without knowledge of the plant), we need to assume any uncontrollable event \( \sigma \in \Sigma_u \) can happen after \( s \in \overline{K_{D_m}} \) unless \( s\sigma \in D^- \) (known to be impossible). This observation leads to the following:

\begin{theorem} \label{theorem:ctiterion for informativity}
    (Criterion for marking informativity). Suppose that an event set \( \Sigma = \Sigma_c \cup \Sigma_u \) and a control specification \( E \subseteq \Sigma^* \) are given. \( (D, D_m, D^-) \) is marking informative for \( E \) if and only if
    \begin{align}
    (\forall s \in \overline{K_{D_m}}, \forall \sigma \in \Sigma_u) \quad s\sigma \in \overline{K_{D_m}} \cup D^- \label{eq:informativity}
    \end{align}
    where \( K_{D_m} \) is in (\ref{eq:spec besed on Dm}).
\end{theorem}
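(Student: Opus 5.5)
We prove both directions using the second form of Definition~\ref{def:informativity}: $(D,D_m,D^-)$ is marking informative iff $K_{D_m}\neq\emptyset$ and $K_{D_m}$ is controllable, in the sense of Definition~\ref{def:controllability}, with respect to every plant $G$ consistent with $(D,D_m,D^-)$. We take $K_{D_m}\neq\emptyset$ as standing, as in Problem~\ref{prob:data-driven marking supervisory control problem}. Note that $K_{D_m}=D_m\cap E\subseteq D_m\subseteq L_m(G)$ for every consistent $G$, so Definition~\ref{def:controllability} applies. Also $\overline{K_{D_m}}\subseteq\overline{D_m}\subseteq\overline{D}$ and $\overline{D}\subseteq L(G)$.

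\emph{Sufficiency.} Assume (\ref{eq:informativity}) holds, and let $G$ be consistent with $(D,D_m,D^-)$. Take $s\in\overline{K_{D_m}}$ and $\sigma\in\Sigma_u$ with $s\sigma\in L(G)$. By (\ref{eq:informativity}), $s\sigma\in\overline{K_{D_m}}\cup D^-$. Consistency gives $D^-\cap L(G)=\emptyset$, so $s\sigma\notin D^-$. Hence $s\sigma\in\overline{K_{D_m}}$, which is (\ref{eq:controllability}) for $G$. Since $G$ was an arbitrary consistent plant, the data are marking informative.

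\emph{Necessity.} We argue by contraposition. Suppose there exist $s\in\overline{K_{D_m}}$ and $\sigma\in\Sigma_u$ with $s\sigma\notin\overline{K_{D_m}}\cup D^-$. We build a consistent plant $G'$ with $s\sigma\in L(G')$; then $K_{D_m}$ is not controllable with respect to $G'$, so informativity fails. The natural choice is a finite automaton with
\[
L(G')=\overline{D}\cup\{s\sigma\}, \qquad L_m(G')=D_m .
\]
Both languages are finite, hence regular. The first is prefix-closed because $s\in\overline{D}$, and the second is contained in the first. A trie (prefix-tree) automaton on $\overline{D}\cup\{s\sigma\}$, with marker states exactly at the strings of $D_m$, realizes them, as in Remark~\ref{re:relationship of regular language and automaton}.

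The main point to check is that $G'$ is consistent. Clearly $\overline{D}\subseteq L(G')$ and $D_m\subseteq L_m(G')$. For $D^-\cap L(G')=\emptyset$, use $\overline{D}\cap D^-=\emptyset$ together with the assumption $s\sigma\notin D^-$. Then $s\in\overline{K_{D_m}}$, $\sigma\in\Sigma_u$, $s\sigma\in L(G')$ and $s\sigma\notin\overline{K_{D_m}}$ together violate (\ref{eq:controllability}), completing the contrapositive.

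Nonblocking of $G'$ is not required by Definition~\ref{def:consistency}, so the extra unmarked string $s\sigma$ causes no difficulty. If a nonblocking plant were preferred, one could append a completion of $s\sigma$ to a new marked string outside $D^-$; that is unnecessary here.
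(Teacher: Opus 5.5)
Your proof is correct. Sufficiency matches the paper's argument, but for necessity you take a different route.

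The paper argues directly with a single, maximally permissive witness plant $G'$, where $L(G')=\Sigma^*\setminus D^-$. It then splits on whether $s\sigma\in L(G')$. You instead argue by contraposition and build, for each violating pair $(s,\sigma)$, the minimal consistent plant with $L(G')=\overline{D}\cup\{s\sigma\}$ and $L_m(G')=D_m$.

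The appeal of the paper's single plant is that it reduces informativity to controllability with respect to one worst-case plant. As written, however, that witness does not exist in general. $\Sigma^*\setminus D^-$ need not be prefix-closed: if $t\in D^-$ and $t\sigma\notin D^-$, then $t\sigma$ lies in it but $t$ does not. So it is not the closed behavior of any automaton. Its marking is also left unspecified, although $D_m\subseteq L_m(G')$ is needed for consistency.

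The paper's argument can be repaired by taking $L(G')=\Sigma^*\setminus D^-\Sigma^*$, which is regular and prefix-closed, contains $\overline{D}$, and misses $D^-$. One must then add an observation: every proper prefix of $s\sigma$ lies in $\overline{D}$, hence outside $D^-$. Therefore $s\sigma\notin L(G')$ forces $s\sigma\in D^-$.

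Your finite witness avoids both issues. Prefix-closure follows from $s\in\overline{K_{D_m}}\subseteq\overline{D}$, and the marking is given explicitly. Consistency is then a one-line check using $\overline{D}\cap D^-=\emptyset$ and $s\sigma\notin D^-$. The cost of your route is only that you need one witness per violating pair, which is harmless for a contrapositive argument.
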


\textbf{Proof:} (If) Suppose that (\ref{eq:informativity}) holds. Then it is easy to verify that
\begin{align}
 (\forall s \in \overline{K_{D_m}}, \forall \sigma \in \Sigma_u) \quad s\sigma \in L(G) \implies s\sigma \in \overline{K_{D_m}}\label{eq:if of poof}
 \end{align}
 holds for every plant \( G \) consistent with \( (D, D_m, D^-) \). Thus \( (D, D_m, D^-) \) is marking informative for \( E \).

 (Only if) Suppose that \( (D, D_m, D^-) \) is marking informative for \( E \). This means by Definition \ref{def:informativity} that (\ref{eq:if of poof}) holds for every plant \( G \) consistent with \( (D, D_m, D^-) \). Consider a special plant \( G' \) such that \( L(G') = \Sigma^* \setminus D^- \). This \( G' \) is consistent with \( (D, D_m, D^-) \), and any string not in \( L(G') \) belongs to \( D^- \). Let \( s \in \overline{K_{D_m}} \) and \( \sigma \in \Sigma_u \). Consider two cases. 
 Case 1: \( s\sigma \in L(G') \). Since (\ref{eq:if of poof}) holds for \( G' \), we have \( s\sigma \in \overline{K_{D_m}} \).
 Case 2: \( s\sigma \notin L(G') \). In this case, \( s\sigma \in D^- \). Thus, (\ref{eq:informativity}) is satisfied.

Theorem \ref{theorem:ctiterion for informativity} provides a necessary and sufficient condition for marking data-informativity. Comparing (\ref{eq:informativity}) with the standard controllability condition of \( K_{D_m} \), the absence of $L(G)$ and the part of \( D^- \) mark the distinctions in the data-driven framework. 

\begin{figure}[H]
  \centering
  \includegraphics[width=100mm]
  {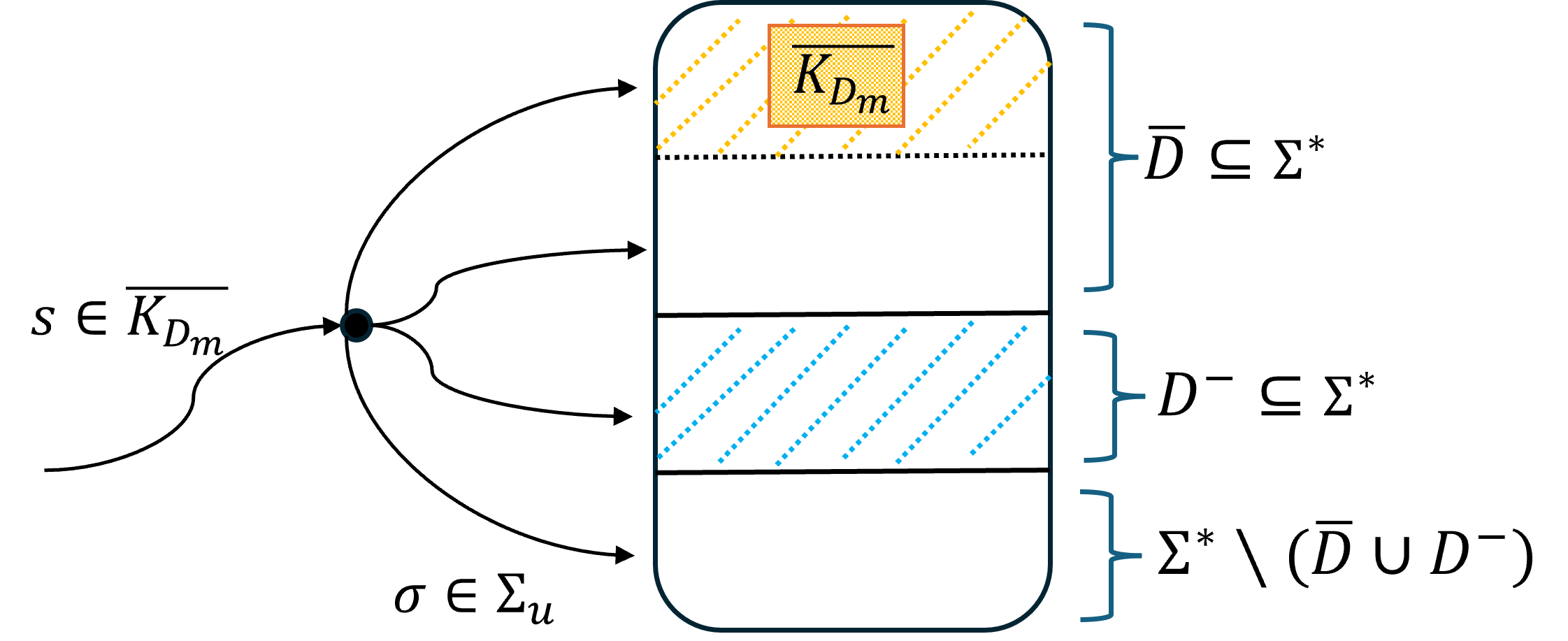}
  \caption{Illustration of transtion for $\
  s\in\overline{K_{D_m}}$ and $\sigma\in\Sigma_u$ in (\ref{eq:informativity}).
  \label{fig:understanding informativity}}
\end{figure}

To illustrate Theorem 1, write (3.4) equivalently by the following two equations:
\begin{align}
 (\forall s \in \overline{K_{D_m}}, \forall \sigma \in \Sigma_u) \quad &s\sigma \in \overline{D} \implies s\sigma \in \overline{K_{D_m}}\label{eq:controllability wrt Dbar}  \\
 \mathrm{and} \quad &s\sigma \notin \overline{D} \implies s\sigma \in D^-.\label{eq:sufficiency of D^-}
\end{align}
 
\noindent
Using only the given dataset ($D, D_m, D^-$), to determine whether $K_{D_m}$ is controllable with respect to all plants $G$ consistent with them, it is crucial to evaluate the transitions of a string $s\in \overline{K_{D_m}}$ and an event $\sigma\in\Sigma_u$, as illustrated in Figure \ref{fig:understanding informativity}. 
(3.6) represents the controllability of $K_{D_m}$ within $\overline{D}$ as a part of the unknown plant $G$ identified through observation, corresponding to the yellow region in Figure \ref{fig:understanding informativity}.
If $s\sigma\in\overline{D}\setminus\overline{K_{D_m}}$, it indicates that controllability is already violated within the observed scope, ($D, D_m, D^-$) is not marking informative. 
Unlike the model-based approach in (\ref{eq:controllability}), where any string in $\overline{K_{D_m}}$ cannot exit $\overline{K_{D_m}}$ on a continuation by an uncontrollable event, the data-driven approach requires considering the possibility of unobserved strings $s\sigma\notin\overline{D}$. 
In this case, it is necessary to ensure that such $s\sigma$ does not occur in $G$ namely $s\sigma\in D^-$. 
This requirement is precisely captured by (\ref{eq:sufficiency of D^-}), corresponding to the light blue region in Figure \ref{fig:understanding informativity}.
Finally, if $s\sigma\in\Sigma^\ast\setminus(\overline{D}\cup D^-)$, given data set lacks sufficient information and is therefore not marking informative.

 Below we remark on the key role played by $D^-$ in affecting the quality of data set.

 \begin{remark} \label{re: quantity vs quality}
 The set \(D^-\) contains strings that cannot be generated by the unknown plant (i.e., prior knowledge of impossible behavior of the plant). If we have little such prior knowledge (i.e., \(|D^-| \to 0\)), in order for (\ref{eq:informativity}) to be satisfied, \(\overline{K_{D_m}}\) must include (almost) all one-step uncontrollable continuations of strings belonging to itself. This, in turn, requires the observation data set \(D\) and $D_m$ to be rather exhaustive with respect to the occurrence of uncontrollable events, which may be challenging to obtain in practice (therefore (\ref{eq:informativity}) difficult to satisfy). Hence, in general, a larger prior knowledge \(D^-\) effectively helps relieve the requirement on obtaining observation data \(D\) and $D_m$. Note, however, that not all strings in \(D^-\) are equally useful; according to (\ref{eq:informativity}), the useful strings in \(D^-\) are precisely those of one-step uncontrollable continuation of strings in ``\(\overline{K_{D_m}}\)''. In view of the above, it is not the sheer quantity of \(D\), $D_m$ or \(D^-\), but the quality in terms of ``matchness'' between \(D\), $D_m$ and \(D^-\) specified by (\ref{eq:informativity}) that matters for marking informativity.
\end{remark}

\begin{proposition}\label{prop:data-driven supervisor}
Suppose that we are given a finite data triple $(D, D_m, D^-)$ which is marking informative for a given specification $E$. 
Then a marking nonblocking supervisor $V_D: \overline{D} \to Pwr(\Sigma_c)$ such that $L_m(V_D/G) = K_{D_m}$ is constructed as follows:
\begin{align} 
  V_{D}(s) = 
  \begin{cases} 
    \{\sigma \in \Sigma_c \mid s\sigma \notin \overline{K_{D_m}}\} & \text{if } s \in \overline{K_{D_m}}, \\
    \emptyset & \text{if } s \in \overline{D} \setminus \overline{K_{D_m}}.
  \end{cases}
  \label{eq:data-driven supervisor}
\end{align}
\end{proposition}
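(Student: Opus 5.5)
We fix an arbitrary plant $G$ consistent with $(D, D_m, D^-)$ and show three things: $L(V_D/G) = \overline{K_{D_m}}$, then $L_m(V_D/G) = K_{D_m}$ via (\ref{eq:marked language of VD/G}), and finally nonblocking. Since $G$ is arbitrary, this gives the statement. A preliminary observation is that $K_{D_m} = D_m \cap E \subseteq D_m \subseteq \overline{D} \subseteq L(G)$, so $\overline{K_{D_m}} \subseteq \overline{D}$. Hence $V_D$ is defined on every string of $\overline{K_{D_m}}$, and every such string is generated by $G$.

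The core step is the equality $L(V_D/G) = \overline{K_{D_m}}$, proved by induction on string length using the recursive definition of the closed-loop language. Both sets contain $\epsilon$, since $K_{D_m}\neq\emptyset$. For the inclusion $\subseteq$, take $s\sigma \in L(V_D/G)$ with $s \in L(V_D/G) \cap \overline{K_{D_m}}$ by the induction hypothesis, and split into two cases.
\begin{enumerate}
\item If $\sigma \in \Sigma_c$, then $\sigma \notin V_D(s)$ means $s\sigma \in \overline{K_{D_m}}$ by (\ref{eq:data-driven supervisor}).
\item If $\sigma \in \Sigma_u$, then by Theorem~\ref{theorem:ctiterion for informativity} we have $s\sigma \in \overline{K_{D_m}} \cup D^-$. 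Since $s\sigma \in L(G)$ and $D^- \cap L(G) = \emptyset$, we get $s\sigma \in \overline{K_{D_m}}$.
\end{enumerate}
For the inclusion $\supseteq$, take $s\sigma \in \overline{K_{D_m}}$. Then $s \in \overline{K_{D_m}} \subseteq L(V_D/G)$ by induction, and $s\sigma \in \overline{D} \subseteq L(G)$. Also $\sigma \notin V_D(s)$: either $\sigma$ is uncontrollable, or by definition $V_D(s)$ only disables $\sigma$ with $s\sigma\notin\overline{K_{D_m}}$. So $s\sigma \in L(V_D/G)$.

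It remains to compute the marked language: $L_m(V_D/G) = \overline{K_{D_m}} \cap D_m$. Since $K_{D_m} \subseteq D_m$, this set contains $K_{D_m}$. The reverse inclusion is the main obstacle, because a string of $D_m$ that is a proper prefix of a string of $K_{D_m}$ but lies outside $E$ would break equality. I would resolve this in one of two ways. The first is to read the marking convention as in the model-based setting, where a marking supervisor marks $L(V/G)\cap K$ with $K = K_{D_m}$; then $L_m(V_D/G) = \overline{K_{D_m}} \cap K_{D_m} = K_{D_m}$ directly. The second, if (\ref{eq:marked language of VD/G}) must be used literally, is to argue that marking is taken with respect to the specification, i.e.\ that $D_m \cap \overline{K_{D_m}}$ is intersected with $E$ as in the definition of a marking supervisor.

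With $L_m(V_D/G) = K_{D_m}$ established, nonblocking is immediate: $\overline{L_m(V_D/G)} = \overline{K_{D_m}} = L(V_D/G)$.
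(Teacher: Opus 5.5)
Your proof is correct under the paper's marking-supervisor convention, and it is more explicit than the paper's. The paper gives no real argument: it says the result is ``readily derived'' from the model-based construction (\ref{eq:model-based supervisor}). That implicitly uses the fact that marking informativity means $K_{D_m}$ is controllable with respect to every consistent $G$, and then applies the standard model-based theorem plant by plant. You instead verify $L(V_D/G)=\overline{K_{D_m}}$ directly from criterion (\ref{eq:informativity}), $\overline{D}\subseteq L(G)$ and $D^-\cap L(G)=\emptyset$. This covers a point the paper glosses over. The model-based theorem concerns a supervisor defined on all of $L(G)$, whereas $V_D$ is defined only on $\overline{D}$. Your induction shows the closed loop never leaves $\overline{K_{D_m}}\subseteq\overline{D}$, so $V_D$ is only evaluated on its domain. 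It also makes explicit that $D^-$ is exactly what rules out uncontrollable exits from $\overline{K_{D_m}}$.

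Your concern about the marked language is justified, but you should settle it rather than offer two options. Your second ``resolution'' changes the definition rather than proving anything, and no argument could replace it, because under the literal (\ref{eq:marked language of VD/G}) the proposition is false. Take $\Sigma_c=\{a,b\}$, $\Sigma_u=\{d\}$, $D=\{abd\}$, $D_m=\{ab,abd\}$, $D^-=\{d,ad,abdd\}$ (the data $(D_3,D_{m3},D_3^-)$ used to show that $K$-informativity does not pass to sublanguages), but with $E=\{abd\}$. Then $K_{D_m}=\{abd\}$. Condition (\ref{eq:informativity}) holds, since $d,ad,abdd\in D^-$ and $abd\in\overline{K_{D_m}}$. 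Your own computation gives $L(V_D/G)=\{\epsilon,a,ab,abd\}$, hence $L(V_D/G)\cap D_m=\{ab,abd\}\neq K_{D_m}$.

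So fix the reading at the outset, in one of two ways. Either declare that $V_D$ is a marking supervisor in the paper's sense, $L_m(V_D/G)=L(V_D/G)\cap K_{D_m}$, which is also what the paper's appeal to the model-based result tacitly assumes. Or add the hypothesis $\overline{K_{D_m}}\cap D_m=K_{D_m}$, under which the literal definition works. With either fix your proof is complete.
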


The above proposition can be readily derived by employing the marking nonblocking supervisor construction method used in the model-known case (as in (\ref{eq:model-based supervisor})).

\begin{example}\label{ex3:informativity}
    We illustrate the concept of marking data-informativity again using the robot navigation example. Suppose that the plant $G_1$ in Fig. \ref{fig:real_plant} is unknown except for the event set $\Sigma = \Sigma_c \cup \Sigma_u$, where $\Sigma_c = \{a, b, c, e, f\}$ and $\Sigma_u = \{d\}$. Consider two finite data triples $(D_1, D_{m1}, D^-_1)$ and $(D_2, D_{m2}, D^-_2)$, where
    \begin{align*}
      D_1\hspace{0.22cm} &= \{ae, adf, bcf\}, \\
      D_{m1} &= \{adf, bcf\}, \\
      D^-_1\hspace{0.18cm} &= \{d, bd, aed, add, bcd, adfd, bcfd\}, \\
      D_2\hspace{0.22cm}  &= \{acb, acd, bcf\}, \\
      D_{m2} &= \{acb, bcf\}, \\
      D^-_2\hspace{0.18cm} &= \{d, bd, bcd, acbd, bcfd\}.
    \end{align*}
    Let $E = \{ae, acb, adf, bcf\}$. Then the control specifications in (\ref{eq:spec besed on Dm}) are respectively
    \begin{align*}
      K_{D_{m1}} &= D_{m1} \cap E = \{adf, bcf\}, \\
      K_{D_{m2}} &= D_{m2} \cap E = \{acb, bcf\}.
    \end{align*}
    First consider $(D_1, D_{m1}, D^-_1)$. Note that $G_1$ in Fig.~\ref{fig:real_plant} and $G_2$ in Fig.~\ref{fig:automaton_G2} are consistent with $(D_1, D_{m1}, D^-_1)$. 
    From the control specification $K_{D_{m1}} = \{adf, bcf\}$, we have $\overline{K_{D_{m1}}} = \{\epsilon, a, b, ad, bc, adf, bcf\}$ and $\overline{K_{D_{m1}}}\Sigma_u = \{d, ad, bd, add, bcd, adfd, bcfd\}$. 
    Since only $ad$ belongs to $\overline{K_{D_{m1}}}$ and other strings in  $\overline{K_{D_{m1}}} \Sigma_u$ belong to $D^-_1$, the condition (\ref{eq:informativity}) holds and $(D_1, D_{m1}, D^-_1)$ is marking informative for $E$. 
    Indeed, we can confirm the controllability of  $K_{D_{m1}}$  with respect to the consistent plants $G_1$ and $G_2$ (if they were available). 
    Correspondingly a marking nonblocking supervisor $V_{D_1} : \overline{D_1}\to Pwr(\Sigma_c)$ such that $L_m(V_{D_1}/G) = K_{D_{m1}}$ is constructed for every plant $G$ consistent with $(D_1, D_{m1}, D^-_1)$ as follows:
    \begin{align}
        V_{D_1}(s) = \left\{
        \begin{array}{ll}
             \{c, e, f\} &\text{if } s = \epsilon, \\
             \{a, b, e, f\} &\text{if } s = b, \\
             \{a, b, c, e\} &\text{if } s \in \{ad,bc\}, \\
             \{a, b, c, e, f\} &\text{if } s \in \{a, adf,bcf\}, \\
             \emptyset & \text{if } s \in \overline{D_1} \setminus \overline{K_{D_{m1}}}. \nonumber
        \end{array}
    \right.
    \end{align}
    Fig. \ref{fig:Chap3_supervisor_VD1} shows the constructed supervisor $V_{D_1}$.
    \begin{figure}[H]
        \centering
        \includegraphics[width=90mm]{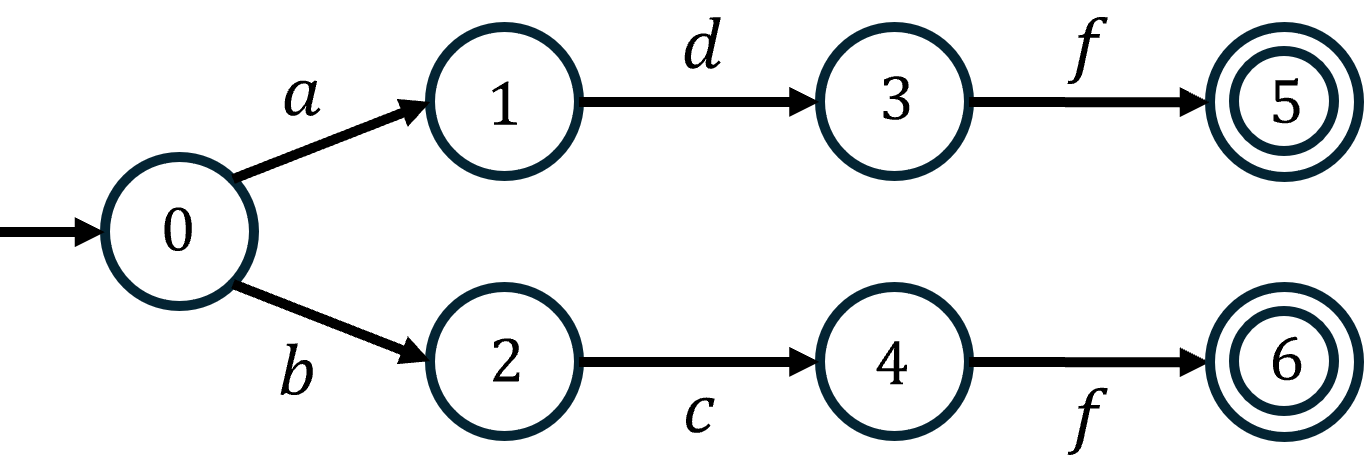}
        \caption{Supervisor $V_{D_1}$ in Example \ref{ex3:informativity}.  \label{fig:Chap3_supervisor_VD1}}
    \end{figure}

    \newpage
    \noindent
    Next, we consider \((D_2, D_{m2}, D^-_2)\). Note that \(G_1\) in Fig. \ref{fig:real_plant} and $G_3$ in Fig.
    \ref{fig:automaton_G3} are consistent with $(D_2, D_{m2}, D^-_2)$. From the control specification $K_{D_{m2}} = \{acb, bcf\}$, we have $\overline{K_{D_{m2}}} = \{\epsilon, a, b, ac, bc, acb, bcf\}$ and $\overline{K_{D_{m2}}} \Sigma_u = \{d, ad, bd, acd, bcd, acbd, bcfd\}$.
    Since $ad, acd \notin \overline{K_{D_{m2}}} \cup D^-_2$, $(D_2, D_{m2}, D^-_2)$ is not marking informative for $E$. The string $acd$ is in $\overline{D_2} \setminus \overline{K_{D_{m2}}}$, so it is outside the specification and can be generated by every consistent plant $G$. On the other hand, we do not have information about the string $ad$, so it is generatable by some consistent plant $G$ and
    not generatable by others. For example, $G_1$ in Fig. \ref{fig:real_plant} can generate $ad$ but $G_3$ in Fig. \ref{fig:automaton_G3} can not generate it.
    \end{example}

To compare the cases with and without considering marked behaviors, the data-informativity without marking and the accompanying definitions are introduced as a Remark \ref{re:data-informativity without marking} \cite{cai2022data}.

\begin{remark} \label{re:data-informativity without marking}
    When marked behaviors are not considered, the data consists only of ($D, D^-$).
    In this case, the control specification based on data is defined as 
    \begin{align}\label{eq:spec based on Dbar}
    K_D:=\overline{D}\cap E \quad (E\subseteq\Sigma^\ast).
    \end{align}
    The criterion for informativity without considering marked behavior, as an alternative to (\ref{eq:informativity}), is given by the following equation:
    \begin{align} \label{eq:informativity without marking}
        (\forall s \in \overline{K_D}, \forall \sigma \in \Sigma_u) \quad s\sigma \in \overline{K_D} \cup D^- .
    \end{align}
    The supervisor $V^{'}_D: \overline{D} \to Pwr(\Sigma_c)$ such that $L(V_D/G) = \overline{K_D}$, as an alternative to (\ref{eq:data-driven supervisor}) is constructed as follows:
    \begin{align} 
    V^{'}_{D}(s) = 
        \begin{cases} 
            \{\sigma \in \Sigma_c \mid s\sigma \notin \overline{K_{D}}\} & \text{if } s \in \overline{K_{D}}, \\
            \emptyset & \text{if } s \in \overline{D} \setminus \overline{K_{D}}.
        \end{cases}
    \label{eq:data-driven supervisor without marking}
    \end{align}
\end{remark}

By comparing the criterion for data-informativity with and without considering marked behaviors (i.e., (\ref{eq:informativity}) and (\ref{eq:informativity without marking})), the only difference lies in the replacement of the control specification $K_{D_m}$ (based on $D_m$) with $K_D$ (based on $\overline{D}$), while the structure remains the same. 
This similarity arises because, in both cases, informativity is established by ensuring two conditions: controllability within the observed scope, as represented by (\ref{eq:controllability wrt Dbar}) and (\ref{eq:sufficiency of D^-}), and the guarantee that unobserved uncontrollable events do not occur, which is ensured by $D^-$. 
While (\ref{eq:informativity}) and (\ref{eq:informativity without marking}) are very similar, the supervisors constructed under our method, which considers marked behaviors, and those constructed without considering them differ significantly in whether the constructed supervisor is guaranteed to be nonblocking. This is demonstrated in the following Example \ref{ex4:marking vs nomarking of superviosr}.

\begin{example}\label{ex4:marking vs nomarking of superviosr}
    Let us demonstrate data-informativity and constructed supervisor when  marked behaviors are not considered, using the same robot navigation example introduced in Example \ref{ex3:informativity}. Suppose the controlled unknown plant is $G_1$ in Fig. \ref{fig:real_plant} and event set $\Sigma = \Sigma_c \cup \Sigma_u$ is given. We consider a finite data set $(D_1, D^-_1)$, which is also the same as in Example \ref{ex3:informativity}.
    Let $E = \{ae, acb, adf, bcf\}$. Then the control specifications in (\ref{eq:spec based on Dbar}) is
    \begin{align*}
      K_{D_1} &= \overline{D_1} \cap E = \{ae, adf, bcf\}. \\
    \end{align*}
    Since (\ref{eq:informativity without marking}) holds with $\overline{K_{D_1}}$, ($D_1, D^-_1$) is informative.     Correspondingly a supervisor $V^{'}_{D_1}: \overline{D_1} \to Pwr(\Sigma_c)$ such that $L(V_{D_1}/G) = \overline{K_{D_1}}$, as is constructed for every plant $G$ consistent with ($D_1, D^-_1$) as follows:
    \begin{align}
    V^{'}_{D_1}(s) = \left\{
    \begin{array}{ll}
         \{c, e, f\} &\text{if } s = \epsilon, \\
         \{a, b, c, f\} &\text{if } s = a, \\
         \{a, b, e, f\} &\text{if } s = b, \\
         \{a, b, c, e\} &\text{if } s \in \{ad,bc\}, \\
         \{a, b, c, e, f\} &\text{if } s \in \{ae, adf,bcf\}, \\
         \emptyset & \text{if } s \in \overline{D_1} \setminus \overline{K_{D_1}}. \nonumber
    \end{array}
    \right.
    \end{align}
    \noindent
    Fig. \ref{fig:Chap3_supervisor_VD1 without marking} shows constructed supervisor $V^{'}_{D_1}$.
    \begin{figure}[H]
        \centering
        \includegraphics[width=90mm]{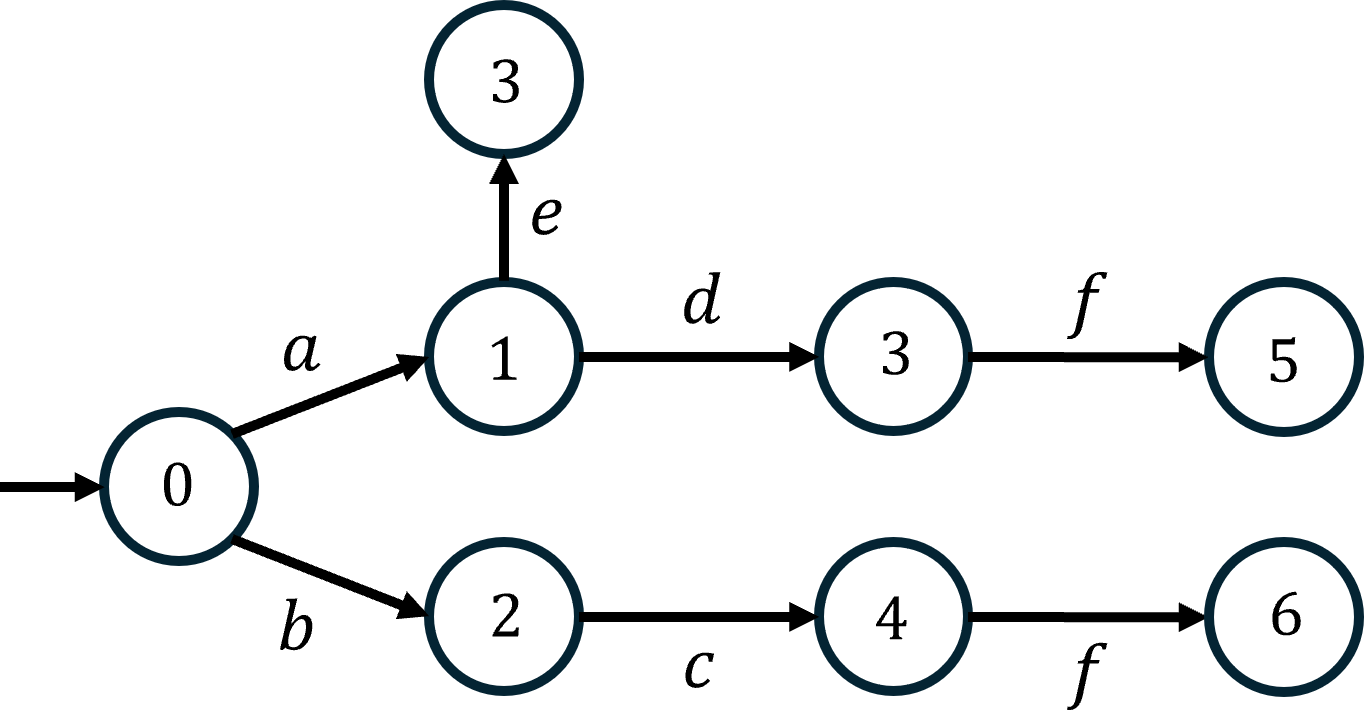}
        \caption{Supervisor $V^{'}_{D_1}$ in Example \ref{ex4:marking vs nomarking of superviosr}.  \label{fig:Chap3_supervisor_VD1 without marking}}
    \end{figure}

    By comparing Figures \ref{fig:Chap3_supervisor_VD1} and \ref{fig:Chap3_supervisor_VD1 without marking}, it can be observed that Figure \ref{fig:Chap3_supervisor_VD1 without marking} retains a path from state 1 to state 3 via event 
    $e$. This path represents a blocking scenario in the robot navigation example, where the agent fails to reach the goal. However, as shown in Figure \ref{fig:Chap3_supervisor_VD1}, such blocking paths can be eliminated by considering marked behaviors, ensuring that only paths leading to the goal remain permissible. 
    This distinction highlights the importance of considering marked behaviors in data-driven supervisory control, especially in scenarios where guaranteeing goal-oriented behavior is critical. On the other hand, supervisory control without considering marked behaviors may be more suitable for tasks where intermediate behaviors or partial achievements are sufficient, and strict goal-reaching is not a requirement. 
    From this perspective, the consideration of marked behaviors proves to be a crucial factor in supervisory control, as it ensures both goal-reaching behavior and the prevention of blocking paths, making it an essential approach in applications where achieving specific outcomes is paramount.
\end{example}

\subsection{Verification of Marking Data-Informativity}
\label{sec:Verification of marking data-informativity}

Based on the necessary and sufficient condition in Theorem \ref{theorem:ctiterion for informativity}, we next present an algorithm for checking marking data-informativity. For this purpose, we first define a {\em data-driven automaton}. We denote by \( q_s \) a state reached by a string \( s \) from the initial state of the automaton. In the data-driven automaton, a state and a string are uniquely corresponded.

\begin{definition}\label{def:DDA}
    (Data-driven automaton). 
    Suppose that the event set \( \Sigma \) and finite data sets \( D, D_m, D^-\subseteq\Sigma^* \) are given (satisfying $D_m\subseteq\overline{D} $, and $ \overline{D}\cap D^-=\emptyset$). Then a {\em data-driven automaton} is defined as follows:
    \begin{align}\label{eq:DDA}
      \hat{G}(\Sigma, D, D_m, D^-) = (\hat{Q}, \Sigma, \hat{\delta}, \hat{q}_{\epsilon}, \hat{Q}_m),
    \end{align}
    where \( \hat{Q} := \{\hat{q}_s \mid s \in \overline{D \cup D^-} \} \) is the state set, and \( \hat{\delta} := \{(\hat{q}_s, \sigma) \to \hat{q}_{s\sigma} \mid s \in \overline{D \cup D^-}, \sigma \in \Sigma, s\sigma \in \overline{D \cup D^-} \} \) is the (partial) state transition function, \( \hat{q}_{\epsilon} \) is the initial state, $\hat{Q}_m:= \{\hat{q}_s \mid s \in D_m\}$ is the marker state set. In addition, given a control specification \( K_{D_m} = D_m \cap E \) (where \( E \subseteq \Sigma^* \) is a regular language), we define $Q_K := \{\hat{\delta}(\hat{q}_{\epsilon}, s) \mid s \in \overline{K_{D_m}} \}\subseteq\hat{Q}$, $Q_- := \{\hat{\delta}(\hat{q}_{\epsilon}, s) \mid s \in D^- \}\subseteq\hat{Q}$.
\end{definition}

A data-driven automaton $\hat{G}$ is a {\em prefix tree automaton} for $\overline{D \cup D^-}$: i.e. a loop-less automaton whose closed behavior is $L(\hat{G})=\overline{D \cup D^-} $ and marked behavior is $L_m(\hat{G})=D_m$. According to the state transition function $\hat{\delta}$, for each string $s \in \overline{D \cup D^-}$, the reached state $\hat{q}_s$ is unique.
The state subset $Q_K$ contains those states reached by strings in $\overline{K_{D_m}}$. Note that since $E$ may not be a finite language in general, in order to determine $Q_K$, we first construct a (finite-state) automaton for $E$ (which is always possible since $E$ is regular), and then check if each string in the finite set $D_m$ occurs in the automaton for $E$. On the other hand, the state subset $Q_-$ contains those states reached by strings in $D^-$, meaning a transition to $Q_-$ represents an impossible behavior of the (unknown) plant. Since $\overline{D} \cap D^- = \emptyset$, we have $Q_K \cap Q_- = \emptyset$. It should be remarked that in general $Q_K \cup Q_- \neq \hat{Q}$. Also, $\hat{G}$ is not consistent with $(D, D_m, D^-)$, as $\hat{G}$ generates strings in $D^-$: i.e. $D^- \subseteq L(\hat{G})$.

\begin{example} \label{ex5:illustrate DDA}
    Here we provide examples of data-driven automatan $\hat{G}_1$ (in Fig. \ref{fig:DDA_G1}) and $\hat{G}_2$ (in Fig. \ref{fig:DDA_G2}) corresponding to the data sets $(D_1, D_{m1}, D^-_1)$ and $(D_2, D_{m2}, D^-_2)$ in Example \ref{ex3:informativity}. For clear display, we have omitted $\hat{q}_s$ in the figure, and only the subscript $s$ is written inside each state. State subsets $Q_K$ and $Q_-$ are represented in orange and blue in the figures, respectively. The states without colors correspond to strings in the observation data set $\overline{D}$, but not in the specification $E$ (thus not in $\overline{K_{D_m}}$).
\end{example}

\begin{figure}[t]
  \centering
  \includegraphics[width=80mm]
  {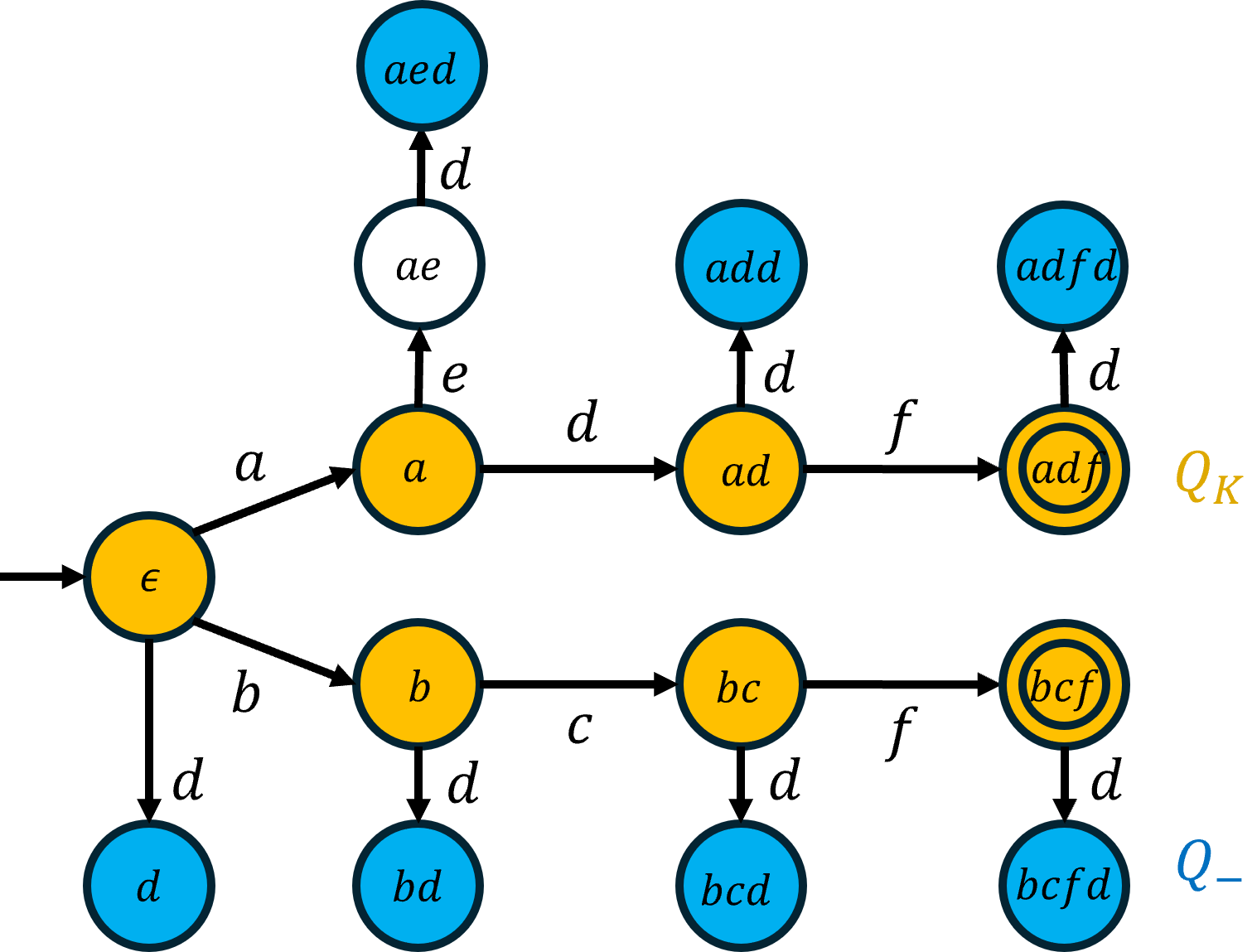}
  \caption{Data-driven automaton $\hat{G}_1$ corresponding to $(D_1, D_{m1}, D^-_1)$. \label{fig:DDA_G1}}
\end{figure}

\begin{figure}[t]
  \centering
  \includegraphics[width=80mm]
  {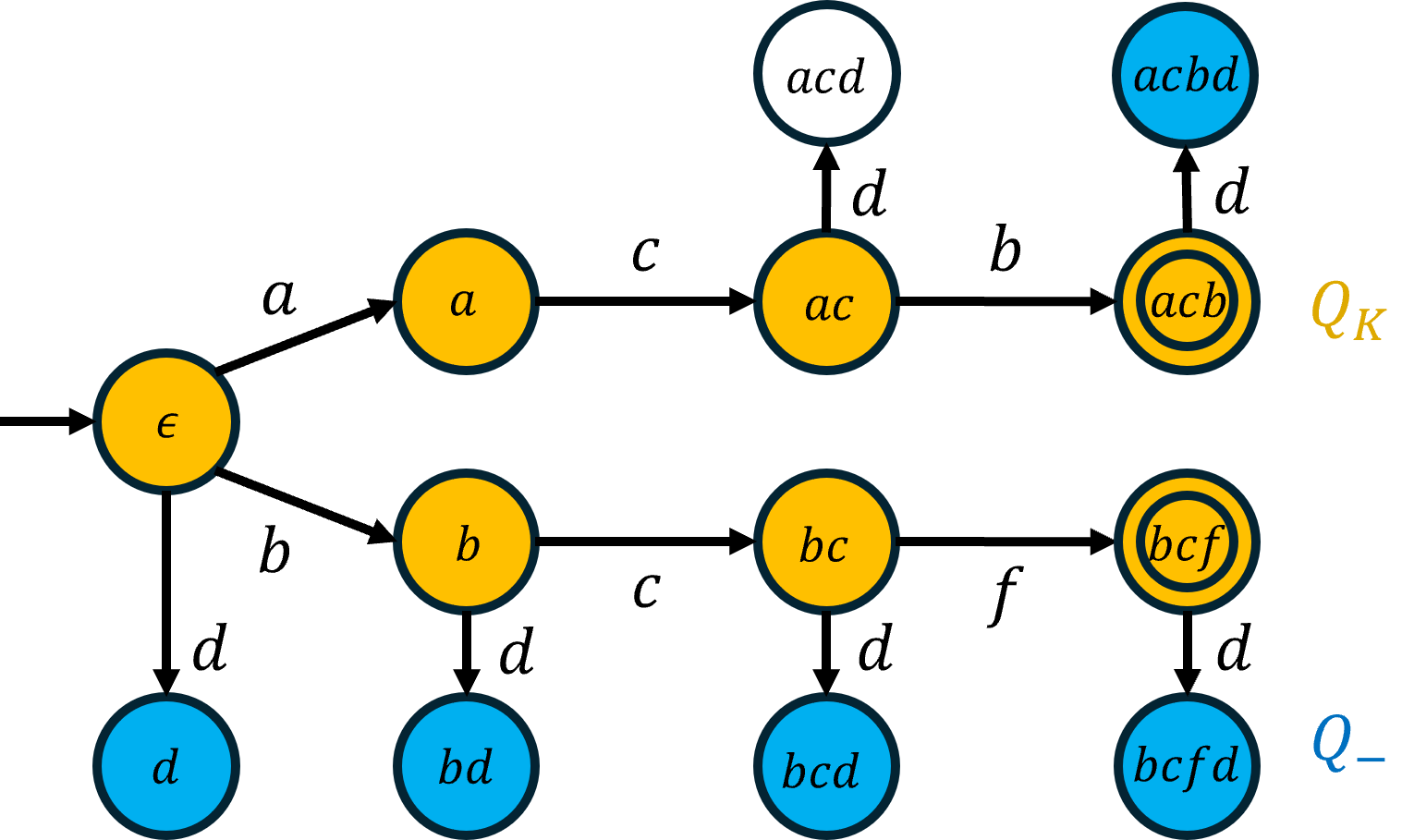}
  \caption{Data-driven automaton $\hat{G}_2$ corresponding to $(D_2, D_{m2}, D^-_2)$. \label{fig:DDA_G2}}
\end{figure}

Now we are ready to present an algorithm for verifying marking informativity based on data-driven automaton.

\clearpage

\begin{algorithm}
\caption{Checking marking informativity}
\label{alg:checking informativity}
\begin{algorithmic}[1]
\REQUIRE Event set $\Sigma = \Sigma_c \cup \Sigma_u$, finite sets $D, D_m, D^- (\subseteq \Sigma^*)$, control specification $K_{D_m} = D_m \cap E$
\ENSURE ``marking informative'' or ``not marking informative''
\STATE Construct a data-driven automaton $\hat{G}(\Sigma, D, D_m, D^-)$\\ $= (\hat{Q}, \Sigma, \hat{\delta}, \hat{q}_\epsilon, \hat{Q}_m)$ and $Q_K, Q_-$ (as in Definition \ref{def:DDA})
\FORALL{$\hat{q} \in Q_K$}
    \FORALL{$\sigma \in \Sigma_u$}
        \IF{($\hat{\delta}(\hat{q}, \sigma)!$ \AND $\hat{\delta}(\hat{q}, \sigma) \in \hat{Q} \setminus (Q_K \cup Q_-)$) \OR ($\neg\hat{\delta}(\hat{q}, \sigma)!$)}
            \RETURN ``not marking informative''
            \STATE \textbf{break}
        \ENDIF
    \ENDFOR
\ENDFOR
\RETURN ``marking informative''
\end{algorithmic}
\end{algorithm}

In Algorithm \ref{alg:checking informativity}, marking data-informativity of $(D, D_m, D^-)$ for $K_{D_m}$ is determined by examining in the data-driven automaton every uncontrollable event at each state in $Q_K$.
If an uncontrollable event $\sigma$ can occur at state $\hat{q} \in Q_K$ and the corresponding transition enters $\hat{Q} \setminus (Q_K \cup Q_-)$, then the transition is contained in $L(G)$ (for all plants $G$ consistent with $(D, D_m, D^-)$) but not contained in $\overline{K_{D_m}}$, which means that there exists a string in $\overline{K_{D_m}}$ that exits $\overline{K_{D_m}}$ by some uncontrollable event. 
Thus $\overline{K_{D_m}}$ is uncontrollable with respect to every plant $G$ consistent with $(D, D_m, D^-)$, and consequently $(D, D_m, D^-)$ is not marking informative.
If an uncontrollable event $\sigma$ cannot occur at state $\hat{q} \in Q_K$, this means that we have no data or prior knowledge about the corresponding transition, and thus we cannot determine whether the transition is generatable by the unknown true plant $G$. 
As a result, $(D, D_m, D^-)$ is not marking informative. The correctness of Algorithm \ref{alg:checking informativity} is asserted by the following proposition.

\begin{proposition} \label{prop:correctness of checking informativity}
Algorithm \ref{alg:checking informativity} returns ``marking informative'' if and only if 
$(D, D_m, D^-)$ is marking informative for $E$.
\end{proposition}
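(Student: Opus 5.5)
The plan is to reduce the statement to Theorem \ref{theorem:ctiterion for informativity}. We show that Algorithm \ref{alg:checking informativity} returns ``marking informative'' exactly when condition (\ref{eq:informativity}) holds. The bridge is the prefix-tree structure of $\hat{G}$ from Definition \ref{def:DDA}. For every $s \in \overline{D\cup D^-}$ the state $\hat{q}_s=\hat{\delta}(\hat{q}_\epsilon,s)$ exists and is unique. Moreover, for $s\in\overline{D\cup D^-}$ and $\sigma\in\Sigma$, we have $\hat{\delta}(\hat{q}_s,\sigma)!$ if and only if $s\sigma\in\overline{D\cup D^-}$, in which case $\hat{\delta}(\hat{q}_s,\sigma)=\hat{q}_{s\sigma}$. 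Since $\hat{G}$ is loop-less, distinct strings reach distinct states. Hence $\hat{q}_t\in Q_K$ iff $t\in\overline{K_{D_m}}$, and $\hat{q}_t\in Q_-$ iff $t\in D^-$.

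Next we record that the map $s\mapsto \hat{q}_s$ is a bijection from $\overline{K_{D_m}}$ onto $Q_K$. This uses $\overline{K_{D_m}}\subseteq\overline{D_m}\subseteq\overline{D}$, so every such $s$ lies in $\overline{D\cup D^-}$. Consequently the double loop over $\hat{q}\in Q_K$ and $\sigma\in\Sigma_u$ is exactly a loop over the pairs $(s,\sigma)\in\overline{K_{D_m}}\times\Sigma_u$ quantified in (\ref{eq:informativity}). For a fixed pair, the algorithm's failure test reads
\[
\bigl(s\sigma\in\overline{D\cup D^-}\ \text{and}\ s\sigma\notin \overline{K_{D_m}}\cup D^-\bigr)\ \text{or}\ s\sigma\notin\overline{D\cup D^-}.
\]
Since $\overline{K_{D_m}}\cup D^-\subseteq \overline{D\cup D^-}$, this disjunction is equivalent to $s\sigma\notin\overline{K_{D_m}}\cup D^-$.

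Both directions then follow. If the algorithm returns ``marking informative'', no pair triggered the test, so every pair satisfies $s\sigma\in\overline{K_{D_m}}\cup D^-$. That is (\ref{eq:informativity}), so by Theorem \ref{theorem:ctiterion for informativity} the triple is marking informative. Conversely, if the triple is marking informative, Theorem \ref{theorem:ctiterion for informativity} gives (\ref{eq:informativity}), so no pair triggers the return of ``not marking informative''. The loops are finite because $\hat{Q}$ is finite (as $D$ and $D^-$ are finite), so the algorithm terminates at the final return.

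The main obstacle is the bookkeeping that identifies membership of states in $Q_K$ and $Q_-$ with membership of strings. In particular, $Q_K$ and $Q_-$ are defined through $\hat{\delta}(\hat{q}_\epsilon,\cdot)$, and we need injectivity of $s\mapsto\hat{q}_s$, which comes from the tree structure. Nonemptiness of $K_{D_m}$ in Definition \ref{def:informativity} is a standing assumption of Problem \ref{prob:data-driven marking supervisory control problem}, so it is taken as given rather than checked by the algorithm.
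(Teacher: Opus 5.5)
Your proposal is correct and follows essentially the same route as the paper: it reduces the claim to Theorem~\ref{theorem:ctiterion for informativity} via the string--state correspondence in the prefix-tree automaton $\hat{G}$, so that the Line~4 test at $\hat{q}_s\in Q_K$ is exactly the negation of $s\sigma\in\overline{K_{D_m}}\cup D^-$. Your version is more careful than the paper's in two places: you make the injectivity of $s\mapsto\hat{q}_s$ explicit (it comes from the states being indexed by strings, not from loop-lessness alone), and you flag the standing assumption $K_{D_m}\neq\emptyset$, without which the algorithm would vacuously return ``marking informative'' even though Definition~\ref{def:informativity} fails.
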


\begin{proof} 
  (If) Suppose that $(D, D_m, D^-)$ is marking informative for $E$. Then from Theorem \ref{theorem:ctiterion for informativity}, every string $s \in \overline{K_{D_m}}$ and every uncontrollable event $\sigma \in \Sigma_u$ satisfy (\ref{eq:informativity}).
  From the definition of $Q_K$ and $Q_-$, $\sigma \in \Sigma_u$ is defined at all $\hat{q}_s \in Q_K$ and $\hat{q}_{s\sigma} = \hat{\delta}(\hat{q}_s, \sigma) \in Q_K \cup Q_-$ holds. This means that Algorithm \ref{alg:checking informativity} returns ``marking informative''.

  (Only if) Suppose that Algorithm \ref{alg:checking informativity} returns ``marking informative''. Then $\hat{q}_{s\sigma}=\hat{\delta}(\hat{q}_s, \sigma) \in Q_K \cup Q_-$ holds for all $\hat{q}_s \in Q_K$ and for all $\sigma \in \Sigma_u$. If $\hat{q}_{s\sigma} \in Q_K$, then $s\sigma \in \overline{K_{D_m}}$ holds, and if $\hat{q}_{s\sigma} \in Q^-$, then $s\sigma \in D^-$ holds. Therefore, (\ref{eq:informativity}) is holds and $(D, D_m, D^-)$ is marking informative for $E$.
\end{proof}

\begin{example} \label{ex6:demonstrate Algorithm1}
  Consider the data-driven automaton $\hat{G}_1$ in Fig. \ref{fig:DDA_G1}. For state $\hat{q}_a \in Q_K$ (orange) and the (only) uncontrollable event $d$, it is satisfied that $\hat{\delta}(\hat{q}_a, d) \in Q_K$ (orange).
  For every other state $\hat{q}_s \in Q_K$ (orange) and the uncontrollable event $d$, it is satisfied that $\hat{\delta}(\hat{q}_s, d) \in Q_-$ (blue). Thus, Algorithm \ref{alg:checking informativity} returns marking informative, which corresponds to the result in Example \ref{ex3:informativity}.
  Next, consider the data-driven automaton $\hat{G}_2$ in Fig. \ref{fig:DDA_G2}. For state $\hat{q}_{ac} \in Q_K$ (orange) and the uncontrollable event $d$, we see that $\hat{\delta}(\hat{q}_{ac}, d)!$ and $\hat{\delta}(\hat{q}_{ac}, d) \notin Q_K \cup Q_-$ (since the transition enters a white state). As a result, Algorithm \ref{alg:checking informativity} returns not marking informative, which again corresponds to the result in Example \ref{ex3:informativity}. In fact, the same conclusion can be drawn based on state $\hat{q}_{a} \in Q_K$ (orange); here $\neg\hat{\delta}(\hat{q}_{a}, d)!$, so Algorithm \ref{alg:checking informativity} returns not marking informative.
\end{example}

We end this section with a note on the quantity versus the quality of the dataset  $(D, D_m, D^-)$. On one hand, enlarging the set $D, D_m$ (by making more observations) can reduce the number of models that are indistinguishable from the real plant, as well as allow more behaviors to be enforced. On the other hand, by Theorem \ref{theorem:ctiterion for informativity} (and also Algorithm \ref{alg:checking informativity}), a larger $D, D_m$ means that more strings need to be checked against the condition (\ref{eq:informativity}), and thus marking data-informativity is more difficult to hold (unless the prior knowledge data $D^-$ can also be enlarged accordingly). Hence, marking data-informativity is concerned not just with the sheer quantity of the data, but with the matching quality between the observation $D, D_m$ and the prior knowledge $D^-$ (in the sense of satisfying (\ref{eq:informativity}) as we pointed out in Remark \ref{re: quantity vs quality}). In case that $(D, D_m, D^-)$ fails to be marking informative for a specification $E$ and the prior knowledge $D^-$ cannot be enlarged, then rather than considering making more observations for $D, D_m$, one should look for a smaller subset $K \subseteq \overline {K_{D_m}}$ such that $(D, D_m, D^-)$ may be marking informative. This problem is studied in the next Section.

\section{Restricted Marking Data-Informativity}
\label{chap:Restricted marking data-informativity}
Section \ref{chap:Restricted marking data-informativity} addresses the case where the data set \((D, D_m, D^-)\) is not  marking informative for a given specification by introducing the concept of restricted marking data-informativity. This concept ensures that \((D, D_m, D^-)\) is marking informative for a smaller subset of the specification, allowing the construction of a valid supervisor to satisfy the reduced specification.

\subsection{\textit{K}-informativity}\label{sec:K-informatiity}
Given an event set $\Sigma(= \Sigma_c \cup \Sigma_u)$, finite data sets $D, D_m, D^- \subseteq \Sigma^*$, and a control specification $K_{D_m}$ in (\ref{eq:spec besed on Dm}), if $(D, D_m, D^-)$ is verified to be not marking informative for $E$, then there exists no supervisor to solve Problem \ref{prob:data-driven marking supervisory control problem}. However, it is still possible that $(D, D_m, D^-)$ is marking informative for a smaller subset $K \subseteq \overline{K_{D_m}}$. If this is the case, a valid supervisor may be constructed to enforce the smaller specification $K$ for all the plants consistent with $(D, D_m, D^-)$ (including the real plant). In this section, we formulate the notion of \textit{restricted marking data-informativity}, which aims to establish marking informativity by constraining the specification to a smaller subset. Whether or not this restricted marking informativity holds hinges on the particular subset $K$ in question, so we simply term it \textit{K}-informativity and define it as follows.

\begin{definition}(\textit{K}-informativity). \label{def:K-informativity}
Consider a specification $K_{D_m}$ in (\ref{eq:spec besed on Dm}) and let $K \subseteq K_{D_m}$. We say $(D, D_m, D^-)$ is \textit{K}-informative if there exists a marking nonblocking supervisor $V_K : \overline{D} \to Pwr(\Sigma_c)$ such that $L_m(V_K/G) = K$ for every plant $G$ consistent with $(D, D_m, D^-)$, or equivalently, if $K$ is controllable with respect to every plant $G$ consistent with $(D, D_m, D^-)$.
\end{definition}

By this definition, if $(D, D_m, D^-)$ is already marking informative for the specification $E \subseteq \Sigma^*$, then $(D, D_m, D^-)$ is $K_{D_m}$-informative ($K_{D_m} = D_m \cap E$). On the other hand, if $K = \emptyset$, since $\emptyset$ is trivially controllable, $(D, D_m, D^-)$ is always $\emptyset$-informative. However, enforcing $\emptyset$ (i.e., empty behavior) is of little practical use, so we will henceforth only consider nonempty $K$.

If $(D, D_m, D^-)$ is \textit{K}-informative for a given $K$, then the marking noblocking supervisor to realize $K$ can be constructed for every plant consistent with $(D, D_m, D^-)$ in the same way as Proposition \ref{prop:data-driven supervisor}.

For the verification of \textit{K}-informativity, a straightforward modification of Algorithm~\ref{alg:checking informativity} suffices. This is asserted below, as a corollary of Proposition~\ref{prop:correctness of checking informativity}.

\begin{corollary}\label{corollary:K-informativity}
Suppose that we are given an event set $\Sigma = \Sigma_c \cup \Sigma_u$, finite data sets $D, D_m, D^- \subseteq \Sigma^*$, a control specification $E \subseteq \Sigma^*$ and a subset $K \subseteq K_{D_m}$ with $K_{D_m}$ in (\ref{eq:spec besed on Dm}). Then Algorithm \ref{alg:checking informativity} with $Q_K$ redefined as $Q_K := \{ \hat{\delta}(\hat{q}_\epsilon, s) \mid s \in \overline{K} \}$
returns “marking informative” if and only if $(D, D_m, D^-)$ is \textit{K}-informative.
\end{corollary}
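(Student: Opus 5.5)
The plan is to repeat the two-step argument behind Proposition~\ref{prop:correctness of checking informativity}, now with $K$ in place of $K_{D_m}$. First I will prove a $K$-version of Theorem~\ref{theorem:ctiterion for informativity}, namely that $(D, D_m, D^-)$ is $K$-informative if and only if
\begin{align*}
(\forall s \in \overline{K}, \forall \sigma \in \Sigma_u) \quad s\sigma \in \overline{K} \cup D^- .
\end{align*}
Then I will show that the modified Algorithm~\ref{alg:checking informativity} tests exactly this condition on the data-driven automaton $\hat{G}$ of Definition~\ref{def:DDA}.

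For the criterion, the (If) direction goes as follows. Take any $G$ consistent with $(D, D_m, D^-)$, any $s\in\overline{K}$ and any $\sigma\in\Sigma_u$ with $s\sigma\in L(G)$. Since $D^-\cap L(G)=\emptyset$, we cannot have $s\sigma\in D^-$, so $s\sigma\in\overline{K}$. Hence $K$ is controllable with respect to $G$. By Definition~\ref{def:K-informativity}, this gives $K$-informativity.

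For the (Only if) direction I will reuse the special plant $G'$ with $L(G')=\Sigma^*\setminus D^-$. The construction needs one adjustment: its marker set must be chosen so that $D_m\subseteq L_m(G')$, which is possible because $D_m\subseteq\overline{D}\subseteq\Sigma^*\setminus D^-$. Both languages are regular since $D^-$ and $D_m$ are finite, so such an automaton exists. Then $G'$ is consistent with $(D, D_m, D^-)$. Now take $s\in\overline{K}$ and $\sigma\in\Sigma_u$. Either $s\sigma\in D^-$, or $s\sigma\in L(G')$, and in the latter case controllability of $K$ with respect to $G'$ forces $s\sigma\in\overline{K}$.

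For the algorithmic part, note that $K\subseteq K_{D_m}\subseteq D_m\subseteq\overline{D}$. So every $s\in\overline{K}$ lies in $\overline{D\cup D^-}$, and the redefined $Q_K=\{\hat q_s\mid s\in\overline{K}\}$ is well defined with $s\mapsto\hat q_s$ a bijection onto it. Moreover, $Q_K\cap Q_-=\emptyset$ still holds, because $\overline{K}\subseteq\overline{D}$ and $\overline{D}\cap D^-=\emptyset$. The key point is that in the prefix-tree automaton $\hat{\delta}(\hat q_s,\sigma)$ is defined if and only if $s\sigma\in\overline{D\cup D^-}$, and in that case it equals $\hat q_{s\sigma}$. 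Consequently, $\hat{\delta}(\hat q_s,\sigma)\in Q_K$ if and only if $s\sigma\in\overline{K}$, and $\hat{\delta}(\hat q_s,\sigma)\in Q_-$ if and only if $s\sigma\in D^-$; both follow from the uniqueness of states per string.

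The algorithm therefore returns ``marking informative'' exactly when, for all $s\in\overline{K}$ and all $\sigma\in\Sigma_u$, $\hat q_{s\sigma}$ is defined and lies in $Q_K\cup Q_-$. By the previous paragraph this is the same as $s\sigma\in\overline{K}\cup D^-$, which is the criterion above. This matches the proof of Proposition~\ref{prop:correctness of checking informativity} word for word with $K_{D_m}$ replaced by $K$.

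I expect the main obstacle to be the (Only if) step of the criterion, since the paper's $G'$ was specified only through $L(G')$. I must check that its marking can be chosen to make it consistent; if there is any doubt, I will simply mark every state of the acceptor for $\Sigma^*\setminus D^-$, which gives $L_m(G')=\Sigma^*\setminus D^-\supseteq D_m$. The rest is bookkeeping on the prefix tree.
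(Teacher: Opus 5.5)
Your route is the paper's: rerun Theorem~\ref{theorem:ctiterion for informativity} and Proposition~\ref{prop:correctness of checking informativity} with $K$ in place of $K_{D_m}$. Your (If) direction is correct, and so is the prefix-tree bookkeeping: the bijection $s\mapsto\hat q_s$, $Q_K\cap Q_-=\emptyset$, and $\hat\delta(\hat q_s,\sigma)\in Q_K \iff s\sigma\in\overline{K}$.

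The (Only if) direction fails as written, and not for the reason you anticipated. No automaton $G'$ with $L(G')=\Sigma^*\setminus D^-$ exists once $D^-\neq\emptyset$, because a closed behavior must be prefix-closed. For $t\in D^-$ and $\sigma\in\Sigma$, finiteness of $D^-$ gives $t\sigma^k\notin D^-$ for all large $k$. So $t\sigma^k\in\Sigma^*\setminus D^-$ while its prefix $t$ is not. (The proof of Theorem~\ref{theorem:ctiterion for informativity} has the same slip, so you inherited it.) Choosing the marking cannot fix this, and your fallback contradicts itself. Marking every state forces $L_m(G')=L(G')$, which is prefix-closed. Moreover, any automaton with $L_m(G')=\Sigma^*\setminus D^-$ generates $\overline{\Sigma^*\setminus D^-}=\Sigma^*\supseteq D^-$, so it is not consistent. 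You only checked regularity; the missing requirement is prefix-closure of the closed behavior.

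The repair is to take $L(G'):=\Sigma^*\setminus D^-\Sigma^*$, the strings none of whose prefixes lies in $D^-$, and to mark every state. This language has the properties you need:
\begin{itemize}
\item it is regular, since $D^-$ is finite, and it is prefix-closed;
\item it contains $\overline{D}$, because $\overline{D}$ is prefix-closed and disjoint from $D^-$;
\item it excludes $D^-$;
\item $L_m(G')=L(G')\supseteq\overline{D}\supseteq D_m$.
\end{itemize}
Hence $G'$ is consistent with $(D,D_m,D^-)$.

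Your case split then needs one extra observation that the false $G'$ gave you for free. Since $s\in\overline{K}\subseteq\overline{D}$, no prefix of $s$ lies in $D^-$, so the only prefix of $s\sigma$ that could lie in $D^-$ is $s\sigma$ itself. Therefore either $s\sigma\in D^-$, or $s\sigma\in L(G')$, in which case controllability of $K$ with respect to $G'$ gives $s\sigma\in\overline{K}$. With this change the proof is complete.
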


We note that in general $K$-informativity of $(D, D_m, D^-)$ does not imply $K'$-informativity for $K' \subseteq K$ (similar to the fact that a sublanguage of a controllable language need not be controllable). This is illustrated by Example \ref{ex7:K-informativity not imply K^'-informativiey}.

\begin{example} \label{ex7:K-informativity not imply K^'-informativiey}
    Consider the event set $\Sigma=\Sigma_c\cup\Sigma_u$,
    where $\Sigma_c=\{a, b\}$ and $\Sigma_u=\{d\}$, and finite data sets ($D_3, D_{m3}, D_3^-$), where 
    \begin{align*}
      D_3\hspace{0.22cm} &= \{abd\}, \\
      D_{m3} &= \{ab, abd\}, \\
      D^-_3\hspace{0.18cm} &= \{d, ad, abdd\}. 
    \end{align*}
    Let $E = \{ab, abd\}$. Then the control specification in (\ref{eq:spec besed on Dm}) is $K_{D_{m3}} = \{ab, abd\}$. The corresponding data-driven automaton \( \hat{G}_3 \) is shown in Figure \ref{fig:DDA_G3}.
    \begin{figure}[H]
      \centering
      \includegraphics[width=80mm]
      {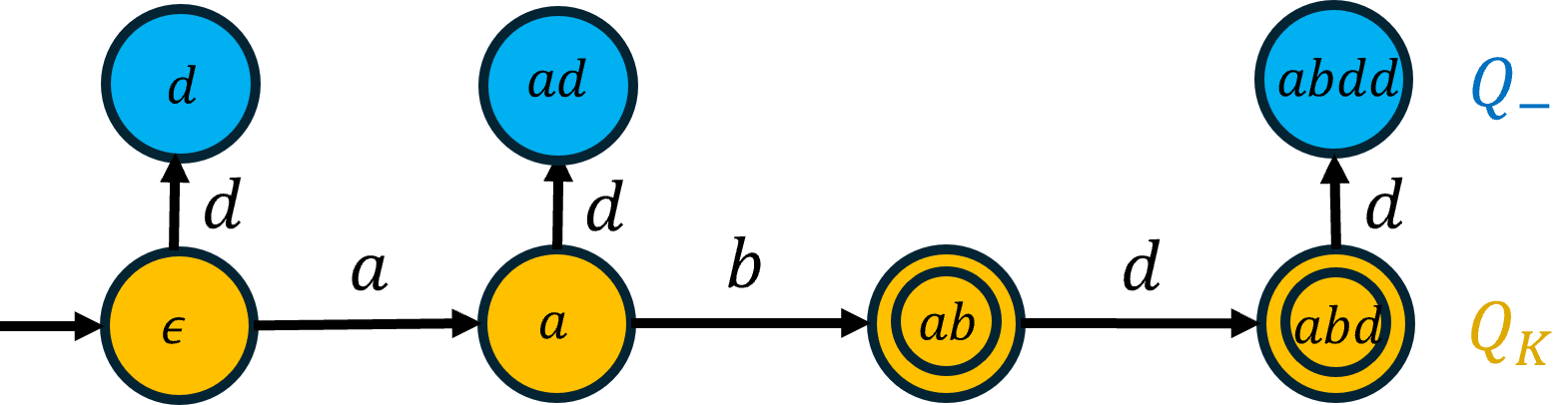}
      \caption{Data-driven automaton $\hat{G}_3$ corresponding to $(D_3, D_{m3}, D^-_3)$. \label{fig:DDA_G3}}
    \end{figure}
    By applying Algorithm \ref{alg:checking informativity}, for every state $\hat{q}_s \in Q_K$ and the uncontrollable event $d$, it is satisfied that $\hat{\delta}(\hat{q}_s, d) \in Q_K\cup Q_-$, hence ($D_3, D_{m3}, D_3^-$) is marking informative for $E$. Therefore, ($D_3, D_{m3}, D_3^-$) is $K_{D_{m3}}$-informative. However, let $K^{'}_{D_{m3}}:=\{ab\}$ (so $K^{'}_{D_{m3}}\subseteq K_{D_{m3}}$). Applying the modified Algorithm \ref{alg:checking informativity} as in Corollary \ref{corollary:K-informativity} returns ``not marking informative''. This is because $\hat{q}_{ab}\in Q_K$ but $\hat{\delta}(\hat{q}_{ab}, d)\notin Q_K \cup Q_-$. Consequently, ($D_3, D_{m3}, D_3^-$) is not $K^{'}_{D_{m3}}$-informative.

\end{example}

\subsection{Marking Informatizability}
\label{sec:Informatizability}
While in the preceding Section \ref{sec:K-informatiity} \textit{K}-informativity is defined and checked for a given subset $K \subseteq K_{D_m}$, we investigate in this section whether or not such a nonempty subset $K$ exists. This problem is formulated below.

\begin{problem}\label{prob:K-informativity}
Suppose that we are given an event set $\Sigma = \Sigma_c \cup \Sigma_u$, a control specification $E$, and finite data sets $D, D_m, D^- \subseteq \Sigma^*$ such that $K_{D_m} \neq \emptyset$, $D_m\subseteq\overline{D}$ and $\overline{D} \cap D^- = \emptyset$. Determine whether or not there exists a nonempty sublanguage $K \subseteq K_{D_m}$ such that $(D, D_m, D^-)$ is \textit{K}-informative.
\end{problem}

We term the solvability of Problem \ref{prob:K-informativity} as a property of the data $(D, D_m, D^-)$.

\begin{definition} (Marking informatizability). \label{def:informatizablity}
We say that $(D, D_m, D^-)$ is marking informatizable for a given control specification $E$ if there exists a nonempty sublanguage $K \subseteq K_{D_m}$ such that $(D, D_m, D^-)$ is \textit{K}-informative.
\end{definition}

In the case where marked behavior is not considered, a criterion for informatizability has been proposed, along with a verification algorithm based on the criterion \cite{ohtsuka2023data}. However, when marked behavior is taken into account, a similar criterion cannot be directly applied. 

Furthermore, we demonstrate the inapplicability of a similar criterion in the context of marked behavior. Specifically, we provide a remark introducing the criterion used in prior research and confirm its limitations through a concrete example. 

\begin{remark} \label{re:criteon for informatizability in prior study}
We introduce the criterion for informatizability in the case where marked behavior is not considered \cite{ohtsuka2023data}. Note that the control specification used in the following criterion is introduced in (\ref{eq:spec based on Dbar}) (i.e., based on $\overline{D}$), rather than the one introduced in (\ref{eq:spec besed on Dm}).

Suppose an event set $\Sigma = \Sigma_c \cup \Sigma_u$ and a control specification $E$ are given. Then, $(D, D^-)$ is informatizable for $E$ if and only if:
\begin{align}\label{eq:criterion for informatizability without marked behavior}
(\forall s \in \overline{K_D} \cap \Sigma_u^*, \forall \sigma \in \Sigma_u) \quad s\sigma \in \overline{K_D} \cup D^-.
\end{align}
\end{remark}

\begin{example} \label{ex8:Criterion for informatizability with marked behavior}
    Consider the event set $\Sigma=\Sigma_c\cup\Sigma_u$,
    where $\Sigma_c=\{a, b\}$ and $\Sigma_u={d}$, and finite data sets ($D_4, D_{m4}, D_4^-$), where 
    \begin{align*}
      D_4\hspace{0.22cm} &= \{a, db\}, \\
      D_{m4} &= \{a, db\}, \\
      D^-_4\hspace{0.18cm} &= \{d, ad, dd\}. 
    \end{align*}
    Let $E = \{a, db\}$. Then the control specification in (\ref{eq:spec besed on Dm}) is $K_{D_{m4}} = \{a, db\}$. The corresponding data-driven automaton \( \hat{G}_4 \) is shown in Figure \ref{fig:DDA_G4}.
    
    \begin{figure}[H]
      \centering
      \includegraphics[width=70mm]
      {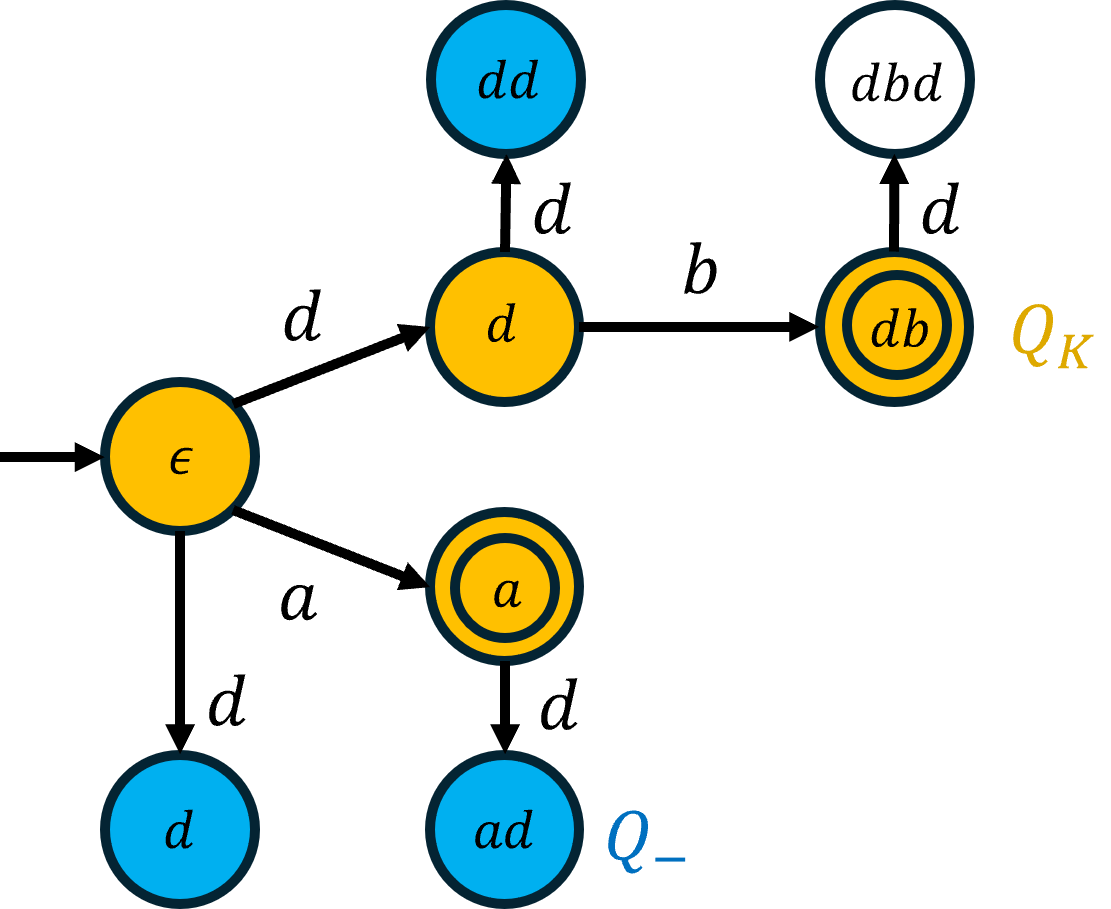}
      \caption{Data-driven automaton $\hat{G}_4$ corresponding to $(D_4, D_{m4}, D^-_4)$. \label{fig:DDA_G4}}
    \end{figure}
    
    \noindent
    By first applying the Algorithm \ref{alg:checking informativity}, it follows that for state $\hat{q}_{db} \in Q_K$ and the uncontrollable event $d$, we see that $\hat{\delta}(\hat{q}_{db}, d)!$ and $\hat{\delta}(\hat{q}_{db}, d) \notin Q_K \cup Q_-$, and therefore, ($D_4, D_{m4}, D^{-}_4$) is not marking informative for $E$. Next, we verify that (\ref{eq:criterion for informatizability without marked behavior}) holds when $K_D$ replaced with $K_{D_{m4}}$. Since $\overline{K_{D_{m4}}} = \{\epsilon, a, d, db\}$, $\overline{K_{D_{m4}}}\cap\Sigma^{\ast}_u = \{\epsilon, d\}$; and because of $\{d, dd\}\subseteq  D^-$, (\ref{eq:criterion for informatizability without marked behavior}) holds. 
    However, we conclude that ($D_4, D_{m4}, D^{-}_4$) is not marking informatizable. This is because, to prohibit the state \( \hat{q}_{db} \), which is the cause of the violation of marking informativity in $\hat{G}_4$, it is necessary to prohibit the transition $b$ on the controllable event from state $\hat{q}_{d}$ to state $\hat{q}_{db}$. Consequently, if the transition from state $\hat{q}_\epsilon$ to state \( \hat{q}_d \) remains, it will lead to blocking, and thus, this transition must also be prohibited. However, since \( d \) is an uncontrollable event, it cannot be prohibited. Therefore, it can be concluded that there is no sublanguage $K\subseteq K_{D_{m4}}$ such that ($D_4, D_{m4}, D^{-}_4$) is $K$-informative.
\end{example}

As illustrated in Example \ref{ex8:Criterion for informatizability with marked behavior}, in the proposed data-driven marking superviosory control, the constructed supervisor must be nonblocking. Consequently, the condition for the given dataset to be marking informatizable becomes stricter compared to the case where marked behavior is not considered (specifically, while condition (4.1) is necessary, it is not sufficient). To address this challenge, the next chapter proposes an algorithm to determine marking informatizability by computing the largest sublanguage $K\subseteq K_{D_m}$ such that ($D, D_m, D^-$) is $K$-informative and evaluating whether it is empty.

\section{Least Restricted Marking Data-Informativity}
\label{chap:Least restricted marking data-informativity}

If marking informatizability of the data set $(D, D_m, D^-)$ is verified to hold, then the existence of a nonempty subset $K \subseteq K_{D_m}$ is assured such that $(D, D_m, D^-)$ is \textit{K}-informative. Namely, $(D, D_m, D^-)$ satisfies restricted marking informative wrt. $K$. In this Section, we further investigate how to systematically find such a nonempty $K$. Of particular interest is to find (if possible) the largest $K_{\sup} \subseteq K_{D_m}$, so that $(D, D_m, D^-)$ is least restricted marking informative wrt. $K_{\sup}$. Then the corresponding supervisor that enforces $K_{\sup}$ is the maximally permissive one in the sense of allowing the largest set of behaviors as possible.

%\subsection{Least Restricted Marking Data-Informativiity}\label{sec:Least restricted marking data-informativiity}

We start by defining the following family of subsets of $K_{D_m}$ with respect to which the data set $(D, D_m, D^-)$ is restricted marking informative:
\begin{align} \label{eq:family of K-informative}
    I(K_{D_m}) := \{ K \subseteq K_{D_m} \mid (D, D_m, D^-) \ \text{is} \ \textit{K}\text{-informative} \}.
\end{align}

Thus if $(D, D_m, D^-)$ is marking informatizable, the family $I(K_{D_m})$ contains a nonempty member. Note also that since $K_{D_m} (= D_m \cap E)$ is finite, the number of members in $I(K_{D_m})$ is finite. 

The next result is key, which asserts that the family 

\begin{proposition}\label{prop:K-informatie is closed under the union}
    Consider the family of $I(K_{D_m})$ in (\ref{eq:family of K-informative}). If $K_1, K_2\in I(K_{D_m})$, then $K_1\cup K_2\in I(K_{D_m})$.
\end{proposition}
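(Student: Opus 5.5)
We reduce $K$-informativity to a purely language-theoretic condition and then check that this condition is preserved under union, mirroring the classical argument that controllable languages are closed under union.

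\emph{Step 1 (characterization).} Combining Definition~\ref{def:K-informativity} with the argument of Theorem~\ref{theorem:ctiterion for informativity} (or directly with Corollary~\ref{corollary:K-informativity}), we establish that for any $K \subseteq K_{D_m}$, $(D, D_m, D^-)$ is $K$-informative if and only if
\begin{align*}
(\forall s \in \overline{K}, \forall \sigma \in \Sigma_u) \quad s\sigma \in \overline{K} \cup D^- .
\end{align*}
The proof is the one given for Theorem~\ref{theorem:ctiterion for informativity}, with $K_{D_m}$ replaced by $K$. For sufficiency, if $s\sigma \in L(G)$ for a consistent $G$, then $s\sigma \notin D^-$, so $s\sigma \in \overline{K}$. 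For necessity, we test against the consistent plant $G'$ with $L(G') = \Sigma^* \setminus D^-$, choosing its marking to contain $D_m$. If one prefers to avoid re-deriving this, Corollary~\ref{corollary:K-informativity} together with Proposition~\ref{prop:correctness of checking informativity} already encodes exactly this condition in automaton terms.

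\emph{Step 2 (union).} Let $K_1, K_2 \in I(K_{D_m})$. Then $K_1 \cup K_2 \subseteq K_{D_m}$ trivially. The key identity is that prefix closure distributes over union:
\begin{align*}
\overline{K_1 \cup K_2} = \overline{K_1} \cup \overline{K_2}.
\end{align*}
This holds because $t$ is a prefix of some string in $K_1 \cup K_2$ iff it is a prefix of a string in $K_1$ or of a string in $K_2$.

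Now take $s \in \overline{K_1 \cup K_2}$ and $\sigma \in \Sigma_u$. Without loss of generality $s \in \overline{K_1}$. By Step 1 applied to $K_1$,
\begin{align*}
s\sigma \in \overline{K_1} \cup D^- \subseteq \overline{K_1 \cup K_2} \cup D^- .
\end{align*}
Hence the Step 1 condition holds for $K_1 \cup K_2$, and by the converse direction of Step 1, $K_1 \cup K_2 \in I(K_{D_m})$.

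\emph{Main obstacle.} The only delicate point is Step 1, specifically the necessity direction. It requires exhibiting a consistent plant whose closed behavior is as large as possible, namely $\Sigma^* \setminus D^-$, and with marking containing $D_m$. Its marking must also be arranged so that $L_m(G') \cap \overline{K} = K$, which is what the marked-language convention of Definition~\ref{def:marked language of VD/G} requires. We would handle this by taking $G'$ with $L(G') = \Sigma^* \setminus D^-$ and $L_m(G') = D_m$. This language pair is regular, since it is co-finite and finite respectively, and it meets Definition~\ref{def:consistency} because $\overline{D} \cap D^- = \emptyset$.

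Nonblockingness of the resulting closed loop is automatic. The supervisor of Proposition~\ref{prop:data-driven supervisor} built from $K$ restricts the closed behavior to $\overline{K}$, whose marked part is $K$, so $\overline{L_m(V_K/G)} = \overline{K} = L(V_K/G)$. The union step itself is routine once this is in place.
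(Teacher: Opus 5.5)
Your proof is correct and follows the paper's. Step~2 is exactly its argument: use $\overline{K_1\cup K_2}=\overline{K_1}\cup\overline{K_2}$, apply the condition for $K_1$, and enlarge. Step~1 only makes explicit the characterization that the paper invokes tacitly through Corollary~\ref{corollary:K-informativity}.

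If you re-derive Step~1 yourself, two repairs are needed:
\begin{itemize}
\item $\Sigma^*\setminus D^-$ is not prefix-closed when $D^-\neq\emptyset$, so it cannot be $L(G')$; the paper's proof of Theorem~\ref{theorem:ctiterion for informativity} has the same slip. Take $L(G')=\Sigma^*\setminus D^-\Sigma^*$ instead. This still contains $\overline{D}$, and for $s\in\overline{K}\subseteq\overline{D}$ it omits $s\sigma$ exactly when $s\sigma\in D^-$.
\item Drop the marking/nonblocking discussion. Your requirement $L_m(G')\cap\overline{K}=K$ fails in general for $L_m(G')=D_m$, and it is not needed: $K$-informativity is equivalently controllability with respect to every consistent plant, which involves only $L(G)$.
\end{itemize}
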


\begin{proof}
    According to the definition of limited marking informativity, letting $s \in \overline{K_1 \cup K_2}$ and $\sigma \in \Sigma_u$, we will show that $s\sigma \in \overline{K_1 \cup K_2} \cup D^-$. Since $s \in \overline{K_1 \cup K_2} = \overline{K}_1 \cup \overline{K}_2$, either $s \in \overline{K_1}$ or $s \in \overline{K_2}$. Consider the former case $s \in \overline{K_1}$ (the latter case $s \in \overline{K_2}$ is symmetric). Since $(D, D_m, D^-)$ is $K_1$-informative, we have
\begin{align*}
    s\sigma \in \overline{K_1} \cup D^- \subseteq (\overline{K_1} \cup \overline{K_2}) \cup D^- = \overline{K_1 \cup K_2} \cup D^-.
\end{align*}
This completes the proof. 
\end{proof}

\noindent
In view of Proposition \ref{prop:K-informatie is closed under the union}, the family $I(K_{D_m})$ contains a unique largest member $K_{\sup}$, which is the union of all the members in the family:
\begin{align} \label{eq:K_sup}
    K_{\sup} := \bigcup \{K \mid K \in I(K_{D_m})\}. 
\end{align}

With respect to this $K_{\sup} \subseteq \overline{K_{D_m}}$, the data set $(D, D_m, D^-)$ is least restricted marking informative.

The supervisor that enforces $K_{\sup}$ can be constructed for every plant consistent with $(D, D_m, D^-)$ in the same way as Proposition \ref{prop:data-driven supervisor}; due to the maximum largeness of $K_{\sup}$, this supervisor is maximally permissive.

Now that we have shown the existence and uniqueness of the largest subset $K_{\sup}$, we proceed to develop an algorithm to compute $K_{\sup}$.

\subsection{Non-Informative State}
\label{sec:Non informative state}
For computing $K_{\sup}$, we first introduce a useful concept of \textit{non-informative state}.

Given an event set $\Sigma$, a control specification $E$, and finite data sets $D, D_m, D^- \subseteq \Sigma^*$ (satisfying $D_m\subseteq\overline{D}$ and $\overline{D} \cap D^- = \emptyset$), construct the corresponding data-driven automaton $\hat{G}(\Sigma, D, D_m, D^-) = (\hat{Q}, \Sigma, \hat{\delta}, \hat{q}_\epsilon, \hat{Q}_m)$ with state subsets $Q_K$ (corresponding to $K_{D_m} = D_m \cap E$) and $Q_-$ (corresponding to $D^-$) as in Definition \ref{def:DDA}. We identify those states in $Q_K$ that violate the condition (\ref{eq:informativity}) of marking informativity. These are exactly the states for which the condition in Line 4 of Algorithm \ref{alg:checking informativity} holds. According to the condition, we state the following definition.

\begin{definition}(Non-informative state). \label{def:non-informative state}
Consider the data-driven automaton \\$\hat{G}(\Sigma, D, D_m, D^-) = (\hat{Q}, \Sigma, \hat{\delta}, \hat{q}_\epsilon, \hat{Q}_m)$ with state subsets $Q_K$, $Q_-$ (as in Definition \ref{def:DDA}). We say that $\hat{q} \in Q_K$ is a non-informative state if
\begin{align} \label{eq:non-informative state}
    (\exists \sigma \in \Sigma_u) \hspace{0.1cm}\hat{\delta}(\hat{q}, \sigma) \in \hat{Q} \setminus (Q_K \cup Q_-) \quad \text{or} \quad \neg \hat{\delta}(\hat{q}, \sigma)!
\end{align}
In addition, define the set of non-informative states as follows:
\begin{align} \label{eq:non-informative state set}
    N(Q_K) := \{ \hat{q} \in Q_K \mid \hat{q} \ \text{is a non-informative state} \}. 
\end{align}
\end{definition}

The set $N(Q_K)$ in (\ref{eq:non-informative state set}) provides an alternative characterization (to Proposition \ref{prop:correctness of checking informativity}) for marking informativity as asserted below.

\begin{proposition} \label{prop:Non-informative state set and informativity}
Consider an event set $\Sigma = \Sigma_c \cup \Sigma_u$, finite data sets $D, D_m, D^- \subseteq \Sigma^*$ (satisfying $D_m\subseteq\overline{D}$, $\overline{D} \cap D^- = \emptyset$), a control specification $E \subseteq \Sigma^*$, a subset $K \subseteq K_{D_m}$ with $K_{D_m}$ in (\ref{eq:spec besed on Dm}), a corresponding data-driven automaton $\hat{G}(\Sigma, D, D_m, D^-) = (\hat{Q}, \Sigma, \hat{\delta}, \hat{q}_\epsilon, \hat{Q}_m)$, and the non-informative state set $N(Q_K)$ in (\ref{eq:non-informative state set}). Then the following hold:
\begin{center}
    $N(Q_K) = \emptyset$ if and only if $(D, D_m, D^-)$ is marking informative for $E$.
\end{center}
\end{proposition}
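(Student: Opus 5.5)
The plan is to reduce the statement to Proposition~\ref{prop:correctness of checking informativity} (equivalently, to Theorem~\ref{theorem:ctiterion for informativity}). The point is that the defining condition (\ref{eq:non-informative state}) of a non-informative state is, verbatim, the condition tested in Line~4 of Algorithm~\ref{alg:checking informativity}. Hence $N(Q_K)\neq\emptyset$ holds exactly when the nested loops of the algorithm encounter some pair $(\hat q,\sigma)\in Q_K\times\Sigma_u$ satisfying Line~4 and return ``not marking informative''. I read the quantifier in (\ref{eq:non-informative state}) as $\exists\sigma$ ranging over both disjuncts, matching Line~4. I would state this identification explicitly first.

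Rather than rely only on the algorithmic correspondence, I would also give a direct argument through (\ref{eq:informativity}). The tool is the bijection $s\mapsto\hat q_s$ between $\overline{D\cup D^-}$ and $\hat Q$ coming from the prefix-tree structure of $\hat G$. Under it, $Q_K$ corresponds to $\overline{K_{D_m}}$ and $Q_-$ to $D^-$. Moreover, $\hat\delta(\hat q_s,\sigma)!$ holds iff $s\sigma\in\overline{D\cup D^-}$.

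(If) Suppose $N(Q_K)=\emptyset$. Take $s\in\overline{K_{D_m}}$ and $\sigma\in\Sigma_u$. Then $\hat q_s\in Q_K$ and $\hat q_s$ is not non-informative. So $\hat\delta(\hat q_s,\sigma)$ is defined and lies outside $\hat Q\setminus(Q_K\cup Q_-)$, i.e.\ $\hat q_{s\sigma}\in Q_K\cup Q_-$. By injectivity of $s\mapsto\hat q_s$, this gives $s\sigma\in\overline{K_{D_m}}\cup D^-$. Thus (\ref{eq:informativity}) holds, and Theorem~\ref{theorem:ctiterion for informativity} yields marking informativity.

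(Only if) Suppose $(D,D_m,D^-)$ is marking informative. Take $\hat q\in Q_K$, so $\hat q=\hat q_s$ for some $s\in\overline{K_{D_m}}$, and fix $\sigma\in\Sigma_u$. By Theorem~\ref{theorem:ctiterion for informativity}, $s\sigma\in\overline{K_{D_m}}\cup D^-\subseteq\overline{D\cup D^-}$, using $K_{D_m}\subseteq D_m\subseteq\overline D$. Therefore $\hat\delta(\hat q_s,\sigma)$ is defined and equals $\hat q_{s\sigma}\in Q_K\cup Q_-$. Neither disjunct of (\ref{eq:non-informative state}) holds for any $\sigma$, so $\hat q\notin N(Q_K)$ and $N(Q_K)=\emptyset$.

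The main obstacle, though mild, is the well-definedness of the identification of states with strings. We need $\hat q_{s\sigma}\in Q_K$ to imply $s\sigma\in\overline{K_{D_m}}$, not merely that $\hat q_{s\sigma}$ equals some $\hat q_t$ with $t\in\overline{K_{D_m}}$. This holds because $\hat G$ is a loop-less prefix tree in which each state is reached by a unique string, so $\hat q_{s\sigma}=\hat q_t$ forces $t=s\sigma$. The same fact, together with $\overline D\cap D^-=\emptyset$, guarantees $Q_K\cap Q_-=\emptyset$ and keeps the case analysis clean.

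Finally, the auxiliary $K\subseteq K_{D_m}$ in the hypothesis plays no role here, since $Q_K$ is built from $K_{D_m}$. I would note that the identical argument, with $Q_K$ redefined from $\overline K$ as in Corollary~\ref{corollary:K-informativity}, gives $N(Q_K)=\emptyset$ iff $K$-informativity.
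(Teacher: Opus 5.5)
Your proposal is correct and follows the paper's proof, which likewise identifies condition (\ref{eq:non-informative state}) with the Line~4 test of Algorithm~\ref{alg:checking informativity} and then invokes Proposition~\ref{prop:correctness of checking informativity}. Your direct argument via Theorem~\ref{theorem:ctiterion for informativity} and the injectivity of $s\mapsto\hat q_s$ simply unpacks the proof of that proposition, with more care about the state--string correspondence than the paper gives.
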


\begin{proof}
  According to Algorithm \ref{alg:checking informativity}, $N(Q_K) = \emptyset$ if and only if the condition in Line 4 is never satisfied, which in turn means that Algorithm\ref{alg:checking informativity} returns “marking informative”. It then follows from Proposition \ref{prop:correctness of checking informativity} that Algorithm \ref{alg:checking informativity} returns “marking informative” if and only if $(D, D_m, D^-)$ is marking informative for $E$; hence the conclusion holds.
\end{proof}

The computation of the non-informative state set can be
adapted from Algorithm~\ref{alg:checking informativity}: instead of returning “not marking informative” immediately after identifying the first non-informative state, the new algorithm checks all states in $Q_K$ against all uncontrollable events in $\Sigma_u$, and stores all identified non-informative states. This new algorithm of computing $N(Q_K)$ is presented in Algorithm~\ref{alg:non-informative state set} below.

\begin{algorithm}
\caption{Non-informative state set}
\label{alg:non-informative state set}
\begin{algorithmic}[1]
\REQUIRE Event set $\Sigma = \Sigma_c \cup \Sigma_u$, finite sets $D, D_m, D^- (\subseteq \Sigma^*)$, control specification $K_{D_m} = D_m \cap E$
\ENSURE Non-informative state set $N(Q_K)$
\STATE Construct a data-driven automaton $\hat{G}(\Sigma, D, D_m, D^-) = (\hat{Q}, \Sigma, \hat{\delta}, \hat{q}_\epsilon, \hat{Q}_m)$ and $Q_K, Q_-$ (as in Definition \ref{def:DDA})
\STATE $N(Q_K) = \emptyset$
\FORALL{$\hat{q} \in Q_K$}
    \FORALL{$\sigma \in \Sigma_u$}
        \IF{($\hat{\delta}(\hat{q}, \sigma)!$ \AND $\hat{\delta}(\hat{q}, \sigma) \in \hat{Q} \setminus (Q_K \cup Q_-)$) \OR ($\neg\hat{\delta}(\hat{q}, \sigma)!$)}
            \STATE $N(Q_K) = N(Q_K) \cup \{\hat{q}\}$
        \ENDIF
    \ENDFOR
\ENDFOR
\RETURN $N(Q_K)$
\end{algorithmic}
\end{algorithm}

The correctness of Algorithm \ref{alg:non-informative state set} is immediate from Definition \ref{def:non-informative state}.

\subsection{Algorithm for Checking Marking Informatizability and Computing $K_{sup}$}
\label{sec:Algorithm for checking informatizability and computing K_sup}
Having introduced and identified the set $N(Q_K)$ of non-informative states in the data-driven automaton $\hat{G}$, it follows from its definition (Definition \ref{def:non-informative state}) that any string in $\overline{K_{D_m}}$ that reaches a non-informative state in $N(Q_K)$ must be excluded in order to achieve restricted marking informativity.
For achieving least restricted marking informativity, namely computing $K_{\sup} \subseteq K_{D_m}$ (as in (\ref{eq:K_sup})), such exclusion (of strings entering $N(Q_K)$) must be done in a minimally intrusive manner.

We propose to compute $K_{\sup}$ in (\ref{eq:K_sup}) based on the structure of the data-driven automaton $\hat{G}$. 
Thus intuitively, the above mentioned string exclusion amounts to ‘avoiding’ the subset $N(Q_K)$ of states in the data-driven automaton $\hat{G}$. 
In order to obtain the largest $K_{\sup} \subseteq K_{D_m}$, the ‘avoidance’ of $N(Q_K)$ should be performed as ‘close’ as possible to $N(Q_K)$ and by means of removing a controllable event. 
This strategy is similar to that of computing the supremal controllable sublanguage in the standard model-based supervisory control theory. 
In view of this, we craft a supervisory control problem based on a modified structure of the data-driven automaton with the specification of avoiding $N(Q_K)$, and solve this problem by the \textbf{supcon} function in (\ref{eq:supcon}) in order to obtain $K_{\sup}$ in (\ref{eq:K_sup}).

Specifically, we define the following relevant automaton. Given an event set $\Sigma$, a control specification $E$, and finite data sets $D, D_m, D^- \subseteq \Sigma^*$, construct the corresponding data-driven automaton $\hat{G}(\Sigma, D, D_m, D^-) = (\hat{Q}, \Sigma, \hat{\delta}, \hat{q}_\epsilon, \hat{Q}_m)$ with state subsets $Q_K$ (corresponding to $K_{D_m} = D_m \cap E$) and $Q_-$ (corresponding to $D^-$) as in Definition \ref{def:DDA}. First, construct a subautomaton $G_D$ of the data-driven automaton $\hat{G}$ by removing $Q_{r_1}:= \{\hat{q}_s \mid (\forall s\in \overline{D\cup D^-}) s \notin \overline{D}\}$ 
(i.e. the state set of subautomaton is $\{\hat{\delta}(\hat{q}_{\epsilon}, s)! \mid \forall s \in \overline{D} \}$ )
and the corresponding transitions; namely
\begin{align} \label{eq:PLANT for K_sup}
    G_D = (Q_D, \Sigma, \delta_D, \hat{q}_\epsilon, \hat{Q}_m) 
\end{align}
where $Q_D := \hat{Q} \setminus Q_{r_1}$ and $\delta_D := \hat{\delta} \setminus \{ (\hat{q}, \sigma) \to \hat{q}' \mid \hat{q} \in Q_{r_1}$ or $\hat{q}' \in Q_{r_1} \}$. Note that from Definition \ref{def:DDA} and $D_m \subseteq \overline{D}$, it follows that the marker states remain unchanged even when $Q_{r_1}$ is removed from $\hat{Q}$.
Also, we have $L(G_D) = \overline{D}, L_m(G_D)=D_m$. This automaton $G_D$ will serve as the plant in \textbf{supcon} function.

Next, let $Q_{r_2} := Q_D \setminus Q_K$ and construct a subautomaton $S_D$ of $G_D$ by removing $N(Q_K) \cup Q_{r_2}$ and the corresponding transitions; namely
\begin{align} \label{eq:SPEC for K_sup}
    S_D = (Q_S, \Sigma, \delta_S, \hat{q}_\epsilon, Q_{S,m})
\end{align}
where $Q_S = Q_D \setminus (N(Q_K) \cup Q_{r_2}) = Q_K \setminus N(Q_K)$ and $\delta_S := \delta_D \setminus \{ (\hat{q}, \sigma) \to \hat{q}' \mid \hat{q} \in N(Q_K) \cup Q_{r_2} \ \text{or} \ \hat{q}' \in N(Q_K) \cup Q_{r_2} \}$, $Q_{S, m}:= \hat{Q}_m\setminus N(Q_K)\cup Q_{r_2}$. This automaton $S_D$ will serve as the specification in \textbf{supcon} function. Note that since $Q_S = Q_K \setminus N(Q_K)$, every state in $q \in Q_S$ satisfies the negation of (\ref{eq:non-informative state}):
\begin{align} \label{eq:property of S_D}
    (\forall \sigma \in \Sigma_u) \ \hat{\delta}(q, \sigma)! \ \text{and} \ \hat{\delta}(q, \sigma) \in Q_K \cup Q_-. 
\end{align}

Let the plant be \( G_D \) in (\ref{eq:PLANT for K_sup}) and the specification be \( S_D \) in (\ref{eq:SPEC for K_sup}). Using the \textbf{supcon} function, the maximally permissive supervisor is computed, which is also an automaton (indeed in this case a subautomaton of \( S_D \)):
\begin{align}\label{eq:supcon(G_D, G_S)}
    P_D=\textbf{supcon}(G_D, S_D)=(Q_P, \Sigma, \delta_P, \hat{q}_\epsilon, Q_{P,m})  
\end{align}
where $Q_P\subseteq Q_S$, $\delta_P\subseteq \delta_S$ and 
$Q_{P,m}\subseteq Q_{S,m}$ such that
\begin{align*}
    L_m(P_D)=\text{sup} \hspace{0.08cm}C(L_m(S_D))
\end{align*}
where
\begin{align}\label{eq:C(L_m(S_D))の性質}
    C(L_m(S_D))=\{K\subseteq L_m(S_D) \mid \overline{K}\Sigma_u\cap L(G_D)\subseteq \overline{K}\}.
\end{align}.

We summarize the above procedure in the form of an algorithm below. The resulting $L_m(P_D)$ is exactly the largest $K_{\sup}$ in (\ref{eq:K_sup}) we are after. Also, if $L_m(P_D)=K_{\sup}\neq \emptyset$, then ($D, D_m, D^-$) is $K_{\sup}$-informative, so ($D, D_m, D^-$) is marking informatizable. On the other hand, if $L_m(P_D)=K_{\sup}= \emptyset$, then no nonempty subset exists for which ($D, D_m, D^-$) is restricted marking informative, so ($D, D_m, D^-$) is nor marking informatizable. Hence the nonemptyness of the following Alghrithm \ref{alg:informatizability and K_sup} serves checking marking informatizability.

\begin{algorithm}
\caption{checking marking informatizability and computing $K_{\sup}$}
\label{alg:informatizability and K_sup}
\begin{algorithmic}[1]
\REQUIRE Event set $\Sigma = \Sigma_c \cup \Sigma_u$, finite sets $D, D_m, D^- (\subseteq \Sigma^*)$, control specification $K_{D_m} = D_m \cap E$
\ENSURE (``marking informatizable'' and $L_m(P_D)$) or ``not marking informatizable''
\STATE Construct a data-driven automaton $\hat{G}(\Sigma, D, D_m, D^-)$\\ $= (\hat{Q}, \Sigma, \hat{\delta}, \hat{q}_\epsilon, \hat{Q}_m)$ and $Q_K, Q_-$ (as in Definition \ref{def:DDA})
\STATE Construct a subautomaton $G_D$ of $\hat{G}$ as in (\ref{eq:PLANT for K_sup})
\STATE Compute non-informative state set $N(Q_K)$ by Algorithm \ref{alg:non-informative state set}
\STATE Construct a subautomaton $S_D$ of $G_D$ as in (\ref{eq:SPEC for K_sup})
\STATE Compute $P_D=\textbf{supcon}(G_D, S_D)$ as in (\ref{eq:supcon(G_D, G_S)})
\IF{$L_m(P_D) \neq \emptyset$ }
\RETURN ``marking informatizable'' and $L_m(P_D)$
\ELSE
\RETURN ``not marking informatizable''
\ENDIF
\end{algorithmic}
\end{algorithm}

\begin{proposition} \label{prop:Informatizability and K_sup of algorithm}
 The following statements hold for Algorithm \ref{alg:informatizability and K_sup}:

1. The language $L_m(P_D)$ returned by Algorithm \ref{alg:informatizability and K_sup} satisfies $L_m(P_D) = K_{\text{sup}}$ in (\ref{eq:K_sup}).

2. Algorithm \ref{alg:informatizability and K_sup} returns ``marking informatizable'' if and only if $(D, D_m, D^-)$ is ``marking informatizable'' for $E$.

\end{proposition}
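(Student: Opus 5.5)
The plan is to first prove statement 1, that $L_m(P_D)=K_{\sup}$, by showing two inclusions. Statement 2 then follows at once. Throughout I will use the restricted analogue of Theorem~\ref{theorem:ctiterion for informativity}, which is implicit in Corollary~\ref{corollary:K-informativity} and the proof of Proposition~\ref{prop:K-informatie is closed under the union}: for $K\subseteq K_{D_m}$, $(D,D_m,D^-)$ is $K$-informative iff $(\forall s\in\overline{K},\forall\sigma\in\Sigma_u)\ s\sigma\in\overline{K}\cup D^-$.

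I will also use the prefix-tree structure of $\hat{G}$: each string $s\in\overline{D\cup D^-}$ reaches a unique state $\hat{q}_s$, and distinct strings reach distinct states. Consequently membership of $\hat{q}_s$ in $Q_K$, $Q_-$, $Q_D$ is equivalent to membership of $s$ in $\overline{K_{D_m}}$, $D^-$, $\overline{D}$ respectively. Likewise $L(S_D)=\{s\in\overline{K_{D_m}}\mid$ no prefix of $s$ reaches $N(Q_K)\}$.

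\emph{Step 1 ($L_m(P_D)\in I(K_{D_m})$, hence $L_m(P_D)\subseteq K_{\sup}$).} First I check $L_m(P_D)\subseteq L_m(S_D)\subseteq K_{D_m}$. This needs the marker states of $S_D$ to be exactly those reached by strings of $K_{D_m}$, i.e.\ by $D_m\cap E$ and not merely by $D_m$. I will read $Q_{S,m}$ this way, intersecting $\hat{Q}_m$ with the states reached by $K_{D_m}$. Otherwise a string in $D_m\cap\overline{K_{D_m}}\setminus E$ could be marked, and I flag this as the point needing care.

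Next, let $s\in\overline{L_m(P_D)}$ and $\sigma\in\Sigma_u$. Then $\hat{q}_s\in Q_P\subseteq Q_S=Q_K\setminus N(Q_K)$, so by (\ref{eq:property of S_D}) the transition $\hat{\delta}(\hat{q}_s,\sigma)$ is defined and $s\sigma\in\overline{K_{D_m}}\cup D^-$. If $s\sigma\in D^-$ we are done. Otherwise $s\sigma\in\overline{K_{D_m}}\subseteq\overline{D}=L(G_D)$. Controllability of $L_m(P_D)=\sup C(L_m(S_D))$ with respect to $G_D$ (see (\ref{eq:C(L_m(S_D))の性質})) then gives $s\sigma\in\overline{L_m(P_D)}$. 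Hence the criterion holds and $L_m(P_D)\in I(K_{D_m})$.

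\emph{Step 2 (every $K\in I(K_{D_m})$ satisfies $K\subseteq L_m(P_D)$, hence $K_{\sup}\subseteq L_m(P_D)$).} Fix $K\in I(K_{D_m})$. I first show $K\subseteq L_m(S_D)$. For every $s\in\overline{K}$ we have $\hat{q}_s\in Q_K$. Moreover, for each $\sigma\in\Sigma_u$, $s\sigma\in\overline{K}\cup D^-\subseteq\overline{K_{D_m}}\cup D^-$, so $\hat{\delta}(\hat{q}_s,\sigma)$ is defined and lies in $Q_K\cup Q_-$. Thus $\hat{q}_s\notin N(Q_K)$, every state along each $s\in K$ survives in $S_D$, and $K\subseteq L_m(S_D)$ since $K\subseteq K_{D_m}$.

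I then show $K$ is controllable with respect to $G_D$. Take $s\in\overline{K}$ and $\sigma\in\Sigma_u$ with $s\sigma\in L(G_D)=\overline{D}$. Since $\overline{D}\cap D^-=\emptyset$, the criterion forces $s\sigma\in\overline{K}$. Therefore $K\in C(L_m(S_D))$ and $K\subseteq\sup C(L_m(S_D))=L_m(P_D)$. Combining Steps 1 and 2, and using the definition (\ref{eq:K_sup}) of $K_{\sup}$ as the union of $I(K_{D_m})$, gives $L_m(P_D)=K_{\sup}$.

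\emph{Step 3 (statement 2).} By Definition~\ref{def:informatizablity}, $(D,D_m,D^-)$ is marking informatizable iff $I(K_{D_m})$ has a nonempty member. By Proposition~\ref{prop:K-informatie is closed under the union} and (\ref{eq:K_sup}), this holds iff $K_{\sup}\neq\emptyset$. By statement 1 this is equivalent to $L_m(P_D)\neq\emptyset$, which is exactly the branch condition of Algorithm~\ref{alg:informatizability and K_sup}.

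The main obstacle is bookkeeping rather than ideas. One part is matching string-level conditions to state-level ones via the injectivity of the prefix tree. The other is making sure the marker set of $S_D$ yields $L_m(S_D)\subseteq K_{D_m}$; this is the step I would verify first and, if needed, repair by the reading of $Q_{S,m}$ given in Step 1.
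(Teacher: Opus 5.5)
Your proposal is correct and follows essentially the paper's route. The paper proves $C(L_m(S_D))=I(K_{D_m})$ by the same two inclusions and the same case analysis. Your Step~1 is the inclusion $C(L_m(S_D))\subseteq I(K_{D_m})$ specialized to $K=L_m(P_D)$. You are also more careful than the paper in checking every prefix of $s\in K$ in Step~2.

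The marker-set issue you flag is genuine. The paper simply infers $L_m(S_D)\subseteq K_{D_m}$ from $Q_S\subseteq Q_K$, but $\hat{Q}_m\cap Q_K$ is reached by $D_m\cap\overline{K_{D_m}}$, which can strictly contain $K_{D_m}$. For example, take $\Sigma_c=\{a,b\}$, $\Sigma_u=\{d\}$, $D=\{ab\}$, $D_m=\{a,ab\}$, $D^-=\{d,ad,abd\}$ and $E=\{ab\}$. Then the literal construction gives $L_m(P_D)=\{a,ab\}\neq\{ab\}=K_{\sup}$. So your reading of $Q_{S,m}$ is needed for statement~1 to hold.
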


\begin{proof}
The proof proceeds in two parts. In Part 1, we establish that the language \(L_m(P_D)\) returned by Algorithm \ref{alg:informatizability and K_sup} satisfies \(L_m(P_D) = K_{\text{sup}}\) in (\ref{eq:K_sup}). We make the following key claim:
\[
C(L_m(S_D)) = I(K_{D_m}).
\]
\noindent
Namely, the family \(C(L_m(S_D))\) in (\ref{eq:C(L_m(S_D))の性質}) of all controllable sublanguages of \(L_m(S_D)\) is exactly the same as the family \(I(K_{D_m})\) in (\ref{eq:family of K-informative}) of all subsets of \(K_{D_m}\) for which \((D, D_m, D^-)\) is restricted marking informative. Under this claim, we derive
\[
L_m(P_D) = \sup C(L_m(S_D)) = \bigcup \{K \mid K \in C(L_m(S_D))\}
= \bigcup \{K \mid K \in I(K_{D_m})\} = K_{\text{sup}}.
\]
\noindent
The last equality is from (\ref{eq:K_sup}), and our conclusion is established.

Now we prove the claim. First, let \(K \in C(L_m(S_D))\), which means that \(K_{D_m} \subseteq L(S_D)\) and \(\overline{K}\Sigma_u \cap L(G_D) \subseteq \overline{K}\). By the definition of \(S_D\) in (\ref{eq:SPEC for K_sup}), \(Q_S \subseteq Q_K\) and thus \(L_m(S_D) \subseteq K_{D_m}\), which in turn implies that \(K \subseteq K_{D_m}
\). In order to show that \(K \in I(K_{D_m})\), we must prove that \((D, D_m, D^-)\) is \(K\)-informative.
For this, let \(s \in \overline{K}\) and \(\sigma \in \Sigma_u\). If \(s\sigma \in L(G_D)\), then it follows from \(\overline{K}\Sigma_u \cap L(G_D) \subseteq \overline{K}\) that \(s\sigma \in \overline{K}\). On the other hand, if \(s\sigma \notin L(G_D)\), we derive \(s\sigma \in D^-\). To see this, consider the state \(q_s\), which is in \(Q_S = Q_K \setminus N(Q_K)\). This means that (\ref{eq:property of S_D}) holds for \(q_s\). But \(\delta(q_s, \sigma)\) cannot be in \(Q_K\) since \(s \notin L(G_D) = \overline{D} \subseteq K_{D_m}\). Thus it is only possible that \(\delta(q_s, \sigma) \in Q_-\), which implies that \(s\sigma \in D^-\). 
To summarize, for an arbitrary \(s \in \overline{K}\) and an arbitrary \(\sigma \in \Sigma_u\), we have \(s\sigma \in \overline{K} \cup D^-\), which means that \((D, D_m, D^-)\) is \(K\)-informative. Therefore, \(K \in I(K_{D_m})\), and \(C(L_m(S_D)) \subseteq I(K_{D_m})\).

It remains to prove that \(I(K_{D_m}) \subseteq C(L_m(S_D))\). Let \(K \in I(K_{D_m})\), which means that \(K \subseteq K_{D_m}\) and \((D, D_m, D^-)\) is \(K\)-informative. Consider an arbitrary string \(s \in K \subseteq K_{D_m}\). Since \((D, D_m, D^-)\) is \(K\)-informative, the state \(q_s \in Q_K \setminus N(Q_K) = Q_S\). Hence \(s \in L_m(S_D)\), and in turn we have \(K \subseteq L_m(S_D)\). 
To show that \(K \in C(L_m(S_D))\), we need to establish that \(K\) is controllable with respect to \(G_D\). To this end, let \(s \in \overline{K}, \sigma \in \Sigma_u\), and assume that \(s\sigma \in L(G_D) = \overline{D}\). It follows again from \(K\)-informativity of \((D, D_m, D^-)\) that \(s\sigma \in \overline{K} \cup D^-\). But we have assumed that \(s\sigma \in \overline{D}\), so it is impossible that \(s\sigma \in D^-\) (by \(\overline{D} \cap D^- = \emptyset\)). Hence \(s\sigma \in \overline{K}\), which means that \(K\) is controllable with respect to \(G_D\). Therefore \(K \in C(L_m(S_D))\), and $I(K_D) \subseteq C(L(S_D))$.

In view of the above, we have established our claim that 
$I(K_{D_m}) = C(L_m(S_D))$,
and the proof of Part 1 is now completed. 

Next, we proceed to Part 2, where we prove that Algorithm \ref{alg:informatizability and K_sup} returns ``marking informatizable'' if and only if \((D, D_m, D^-)\) is ``marking informatizable'' for \(E\). By the first statement of Propositon \ref{prop:Informatizability and K_sup of algorithm}, Algorithm \ref{alg:informatizability and K_sup}
correctly computes the largest subset of $K_{D_m}$ for which the data set $(D, D_m, D^-)$ is least restricted marking informative. If Algorithm \ref{alg:informatizability and K_sup} returns ``marking informatizable'', then $L_m(P_D)=K_{\sup}\neq\emptyset$, which in turn implies $(D, D_m, D^-)$ is $K_{\sup}$-informative, so $(D, D_m, D^-)$ is marking informatizable. On the other hand, if Algorithm \ref{alg:informatizability and K_sup} returns ``not marking informatizable'', then $L_m(P_D)=K_{\sup}=\emptyset$, which in turn implies no empty subset exists for which $(D, D_m, D^-)$ is restricted marking informative, which means that $(D, D_m, D^-)$ is not marking informatizable. The proof of Part 2 is completed.
\end{proof}

We provide an illustrative example for Algorithm \ref{alg:informatizability and K_sup}. 

\begin{example} \label{ex9: checking informatizability and compute K_sup}
Consider the event set $\Sigma=\Sigma_c\cup \Sigma_u$, where $\Sigma_c=\{a, b, c, e, f\}$ and $\Sigma_u=\{d\}$, and  the finite date set $(D_5, D_{m5}, D_5^-)$, where
\begin{align*}
    D_5\hspace{0.22cm} &= \{abe, acfd, aed\},\\
    D_{m5} &= \{ab, abe, acf, aed\},\\
    D_5^-\hspace{0.18cm} &=\{d, ad, acd, aedd\}.
\end{align*}
Let $E=\{ab, acf, aed\}$. Then the control specification in (\ref{eq:spec besed on Dm}) is $K_{D_{m4}}=\{ab, acf, aed\}$. The corresponding data-driven automaton $\hat{G}_5$ is shown in Fig \ref{fig:DDA_G5}.

\begin{figure}[H]
  \centering
  \includegraphics[width=80mm]
  {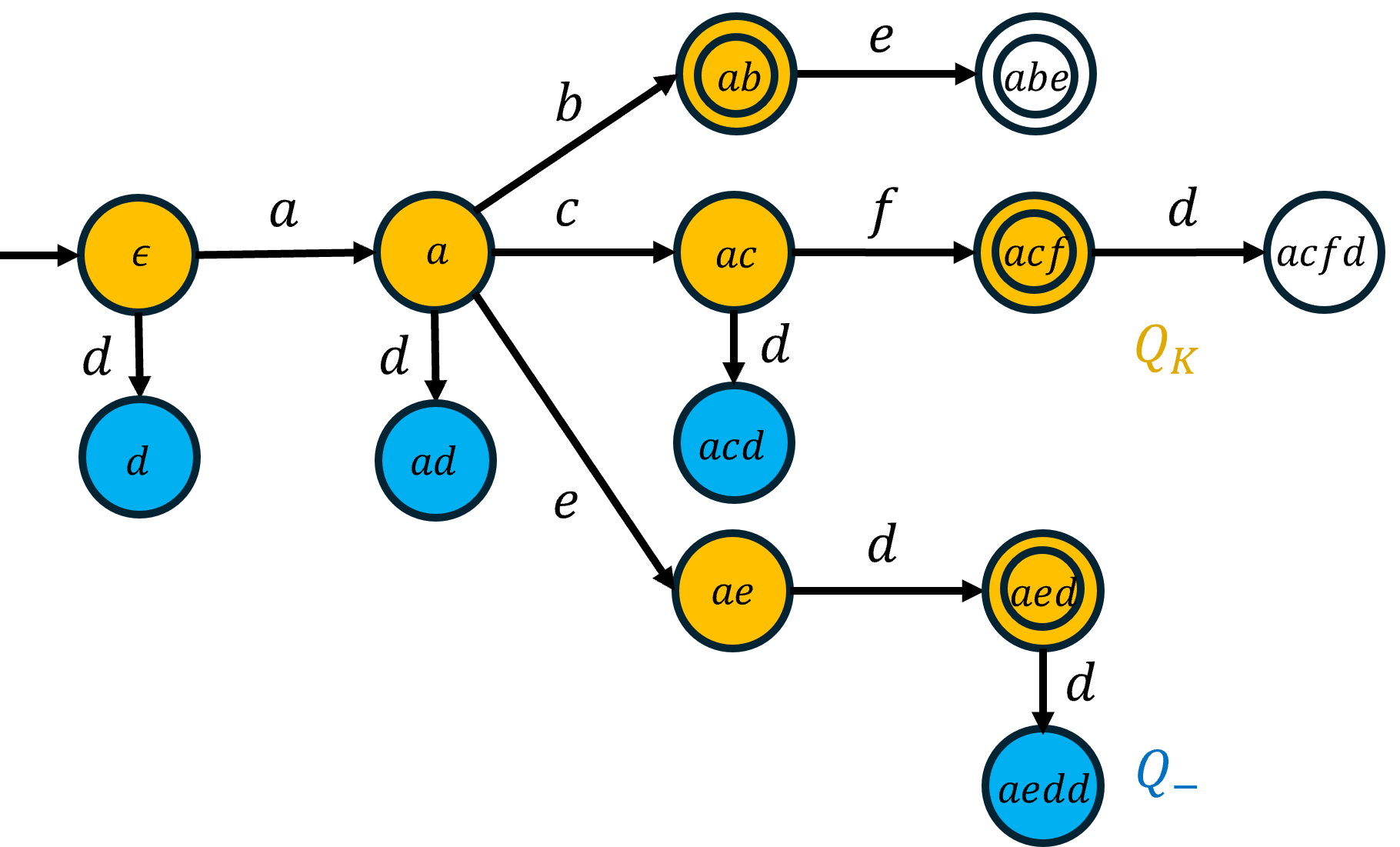}
  \caption{Data-driven automaton $\hat{G}_5$ corresponding to $(D_5, D_{m5}, D_5^-)$. \label{fig:DDA_G5}}
\end{figure}

\noindent
By applying Algorithm \ref{alg:checking informativity}, it is determined that $(D_5, D_{m5}, D_5^-)$ is ``not marking informative'' for $E$. 
However, $(D_5, D_{m5}, D_5^-)$ may be ``marking informatizable'', and there may exist a nonempty sublanguage of $K_{D_{m5}}$ for which $(D_5, D_{m5}, D_5^-)$ is limited marking informative. 
Next we apply Algorithm \ref{alg:informatizability and K_sup} to check marking informatizability and compute the largest such sublanguage.

Line 1 in Algorithm \ref{alg:informatizability and K_sup} of constructing the data-driven automaton \( \hat{G}_5 \) has been done, as displayed in Fig. \ref{fig:DDA_G5}.
Line 2 computes the subautomaton \( G_{D_5} \) by removing the states in \( Q_{r1}=Q_- \) from \( \hat{G}_5 \) and the relevant transitions (as in (\ref{eq:PLANT for K_sup})); the result is displayed in Fig. \ref{fig:PLANT G_D5}.
\begin{figure}[H]
  \centering
  \includegraphics[width=80mm]
  {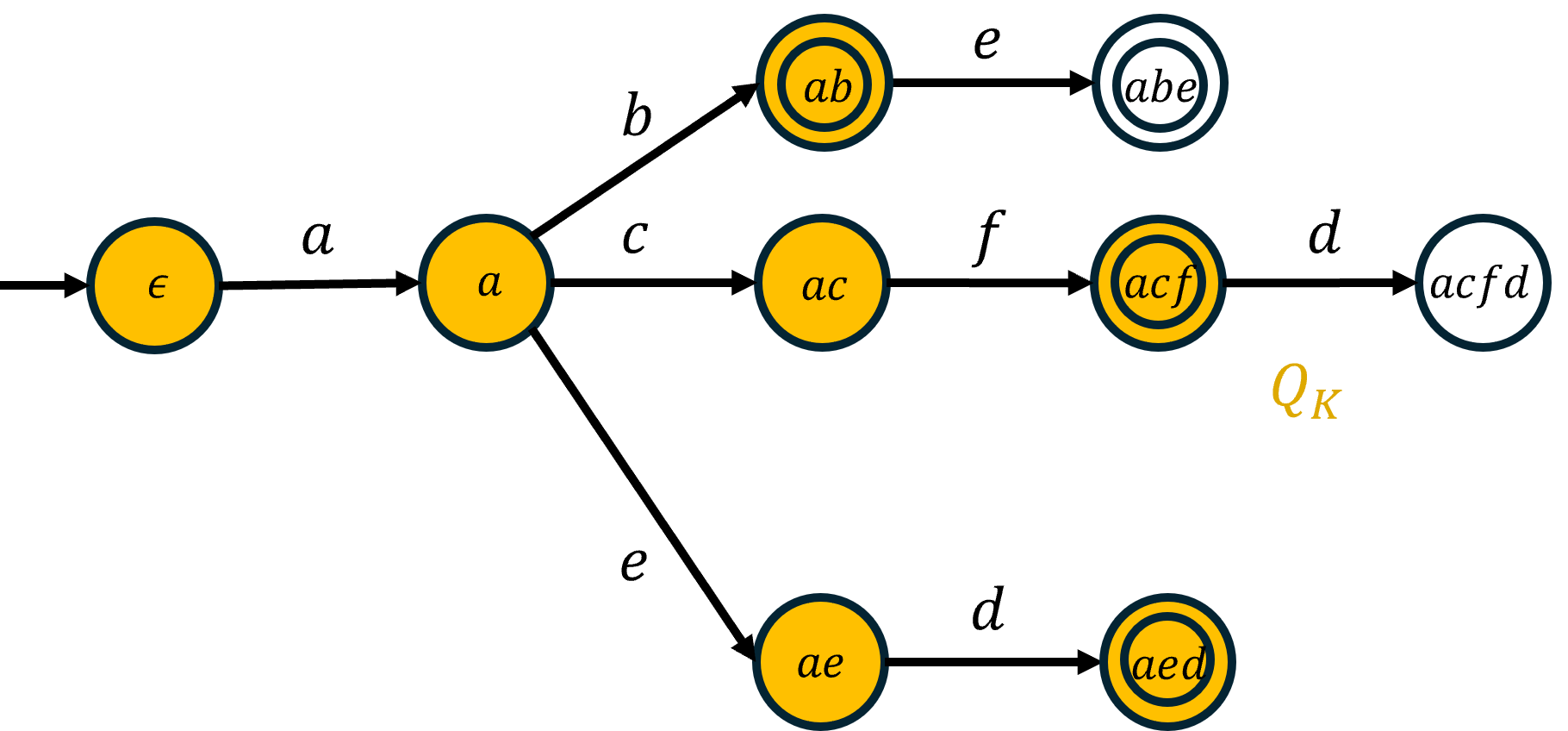}
  \caption{Subautomaton $G_{D_5}$ by Line 2 of Algorithm \ref{alg:informatizability and K_sup}. \label{fig:PLANT G_D5}}
\end{figure}

\noindent
Line 3 applies Algorithm \ref{alg:non-informative state set} to derive the set of non-informative states: \( N(Q_K) = \{ \hat{q}_{ab}, \hat{q}_{acf} \} \). The reason why these two states are non-informative states is as follows: for $\hat{q}_{ab}\in Q_K$ and the uncontrollable event $d$, $\neg\hat{\delta}(\hat{q}_{ab}, d)!$; for $\hat{q}_{acf}\in Q_K$ and the uncontrollable event $d$, $\hat{\delta}(\hat{q}_{acf}, d)!$ and $\hat{\delta}(\hat{q}_{acf}, d)\notin Q_K \cup Q_-$. 
Line 4 constructs subautomaton $S_{D_5}$ by removing the two states in $N(Q_K)$ and the two states in $Q_{r_2}$ (white color coded) from $G_{D_5}$ including the relavant transitions; the result is displayed in Fig. \ref{fig:SPEC S_D5}.
\begin{figure}[H]
  \centering
  \includegraphics[width=80mm]
  {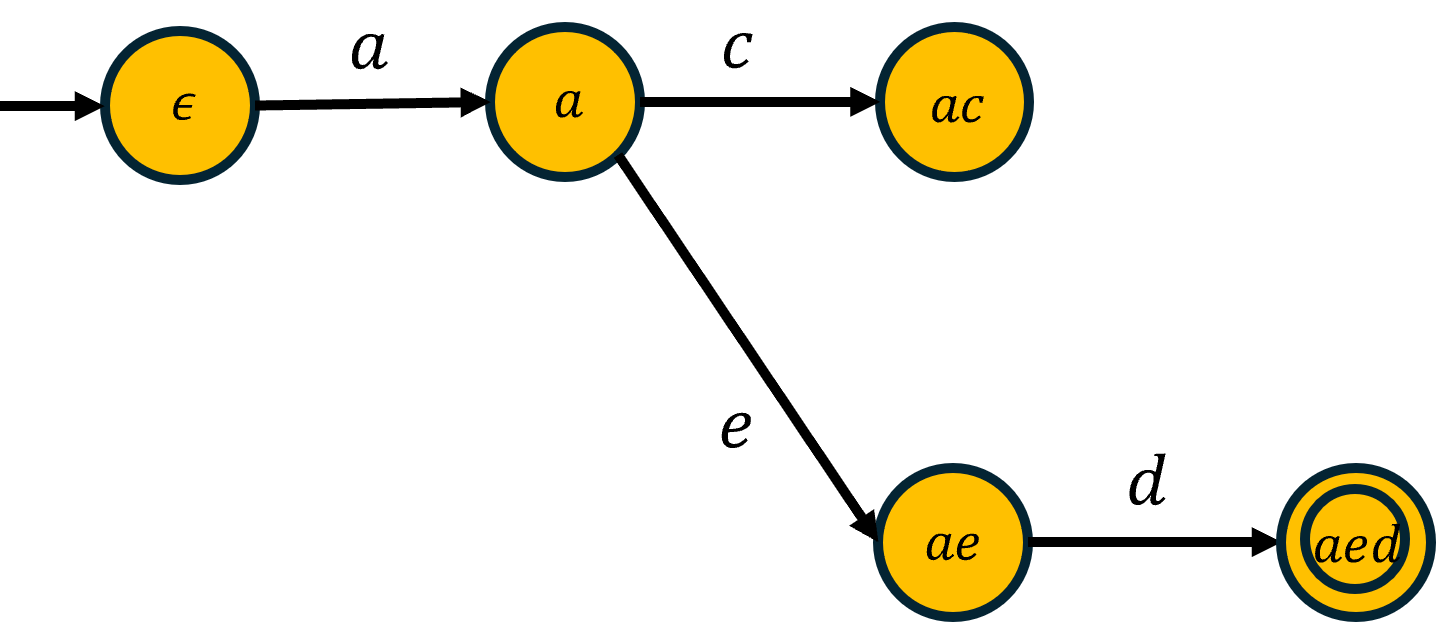}
  \caption{Subautomaton $S_{D_5}$ by Line 4 of Algorithm \ref{alg:informatizability and K_sup}. \label{fig:SPEC S_D5}}
\end{figure}

\noindent
Finally, Line 5 compute the maximally permissive supervisor $P_{D_5}=\textbf{supcon}(G_{D_5}, S_{D_5})$; the result is displayed in Fig \ref{fig:supervisor P_D5}. 
\begin{figure}[H]
  \centering
  \includegraphics[width=80mm]
  {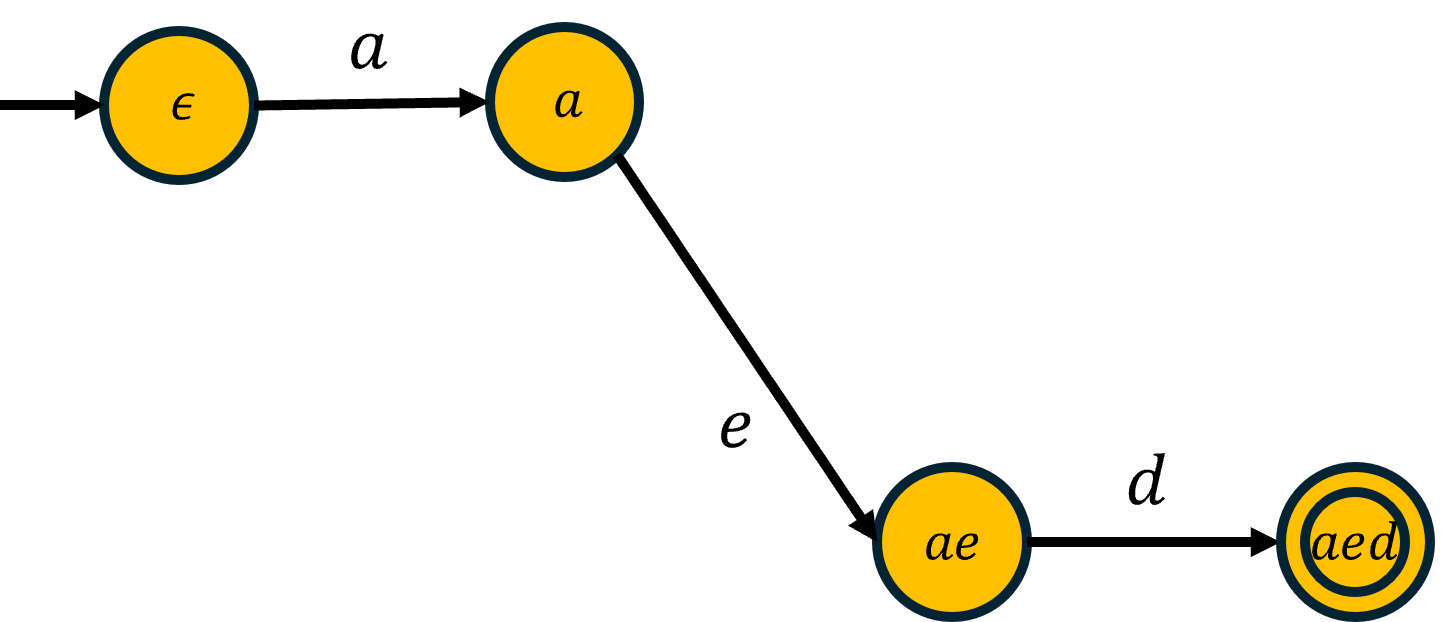}
  \caption{Supervisor $P_{D_5}$ by Line 5 of Algorithm \ref{alg:informatizability and K_sup}. \label{fig:supervisor P_D5}}
\end{figure}

\noindent
Comparing to $S_{D_5}$, $\hat{q}_{ac}$ is removed. This is because ensuring the supervisor $P_{D_5}$ is nonblocking ($\hat{q}_{ac}$ is reachable but not coreachable). Hence the final result is ($D_5, D_{m5}, D_5^-$) is marking informatizable and $L_m(P_{D_5})=\{aed\}$ which is the largest subset of $K_{D_{m5}}$ for which the data set $(D_5, D_{m5}, D_5^-)$ is least restricted marking informative.

Based on $K_{\sup}=L_m(P_{D_5})$, we can construct the corresponding supervisor $V_{\sup}:\overline{D_5}\to Pwr(\Sigma_c)$ such that $L_m(V_{\sup}/G) = K_{\sup}$ for every plant $G$ consistent with ($D_5, D_{m5}, D_5^-$) as follows:
\begin{align}
V_{\sup}(s) = \left\{
\begin{array}{ll}
     \{b, c, e, f\} &\text{if } s = \epsilon, \\
     \{a, b, c, f\} &\text{if } s = a, \\
     \{a, b, c, e, f\} &\text{if } s \in \{ae, aed\}, \\
     \emptyset & \text{if } s \in \overline{D_5} \setminus \overline{K_{\sup}}. \nonumber
\end{array}
\right.
\end{align}
\end{example}

\section{Conclusions}\label{chap:Conclusions}
In this thesis, we have studied marking data-driven supervisory control of DES. We have proposed new concepts of marking data-informativity, marking informatizability, and least restricted marking informativity, as well as developed the corresponding verification and synthesis algorithms based on a novel structure of data-driven automaton. The recipe of using these concepts/algorithms is summarized below: For a given data set \( (D, D_m, D^-) \) and a specification \( E \), first apply Algorithm~\ref{alg:checking informativity} to verify if \( (D, D_m, D^-) \) is informative for \( E \). If yes, we can build a supervisor to enforce \( K_{D_m} = D_m \cap E \). If no, we apply Algorithm~\ref{alg:informatizability and K_sup} to verify if \( (D, D_m, D^-) \) is informatizable for \( E \). If yes, we can get the largest subset of \( K_{D_m} \) for which \( (D, D_m, D^-) \) is least restricted informative. If Algorithm \ref{alg:informatizability and K_sup} returns no, then there is no supervisor that can be built to enforce any subset of \( K_{D_m} \). In this case, one may consider using Algorithm~\ref{alg:computing D^-_new} and Algorithm~\ref{alg:compute update D and Dm} to collect more data that is least restricted marking informative with respect to larger languages. 

To validate the proposed data-driven approach for marking supervisory control, we have conducted experiments on both typical examples and a scenario involving ``unknown environment exploration''. The results have shown that the complexity of the plant and the types of uncontrollable events affect the required quality of $D^-$ for achieving marking informativity. As the plant becomes more intricate or the number of uncontrollable events increases, higher-quality $D^-$ is needed to achieve that the data set is marking informative.

A promising avenue for future work is to explore how the developed data-driven approach can be further refined and extended. This includes investigating whether a supervisor can be effectively constructed using newly obtained data while leveraging the already constructed data-driven automaton. Additionally, exploring the relaxation of current assumptions, such as requiring all uncontrollable events to be considered unless they lead to $D^-$, by preempting uncontrollable events through forcing, remains an open question \cite{gu2024data}. Furthermore, extending informativeness beyond basic controllability to encompass other critical properties, such as observability, diagnosability, and opacity, is an important direction for further research \cite{hadjicostis2020estimation}.

\bibliographystyle{IEEEtran}        % Include this if you use bibtex 
\bibliography{autosam}           % and a bib file to produce the 
                                 % bibliography (preferred). The
                                 % correct style is generated by
                                 % Elsevier at the time of printing.

%\begin{thebibliography}{99}     % Otherwise use the  
                                 % thebibliography environment.
                                 % Insert the full references here.
                                 % See a recent issue of Automatica 
                                 % for the style.
%  \bibitem[Heritage, 1992]{Heritage:92}
%     (1992) {\it The American Heritage. 
%     Dictionary of the American Language.}
%     Houghton Mifflin Company.
%  \bibitem[Able, 1956]{Abl:56}
%     B.~C.~Able (1956). Nucleic acid content of macroscope. 
%     {\it Nature 2}, 7--9. 
%  \bibitem[Able {\em et al.}, 1954]{AbTaRu:54}   
%     B.~C. Able, R.~A. Tagg, and M.~Rush (1954).
%     Enzyme-catalyzed cellular transanimations.
%     In A.~F.~Round, editor, 
%     {\it Advances in Enzymology Vol. 2} (125--247). 
%     New York, Academic Press.
%  \bibitem[R.~Keohane, 1958]{Keo:58}
%     R.~Keohane (1958).
%     {\it Power and Interdependence: 
%     World Politics in Transition.}
%     Boston, Little, Brown \& Co.
%  \bibitem[Powers, 1985]{Pow:85}
%     T.~Powers (1985).
%     Is there a way out?
%     {\it Harpers, June 1985}, 35--47.

%\end{thebibliography}

%\appendix
%\section{A summary of Latin grammar}    % Each appendix must have a short title.
%\section{Some Latin vocabulary}         % Sections and subsections are supported  
                                        % in the appendices.
\end{document}